\documentclass{article}
\PassOptionsToPackage{numbers,compress}{natbib}
\usepackage[preprint]{neurips_2026}
\usepackage[utf8]{inputenc}
\usepackage[T1]{fontenc}
\usepackage{graphicx}
\usepackage{braket}
\usepackage{amsmath}
\usepackage{amsthm}
\usepackage{amssymb}
\usepackage{amsfonts}
\usepackage{array}
\usepackage{booktabs}
\usepackage{hyperref}
\usepackage{url}
\usepackage{microtype}
\usepackage{xcolor}
\usepackage{placeins}
\usepackage{wrapfig}
\usepackage{mathtools}

\newcommand{\ind}{\mathbb{I}}
\newcommand{\Sp}{\mathrm{Sp}}
\newtheorem{theorem}{Theorem}[section]
\newtheorem{proposition}[theorem]{Proposition}

\newcommand{\rob}[1]{}

\title{Equivariant Reinforcement Learning for\\ Clifford Quantum Circuit Synthesis}
\author{%
Richie Yeung$^{1,2}$ \quad Aleks Kissinger$^{1}$ \quad Rob Cornish$^{3}$\\
{\normalfont\small $^{1}$Department of Computer Science, University of Oxford \quad $^{2}$Quantinuum}\\
{\normalfont\small $^{3}$College of Computing and Data Science, Nanyang Technological University, Singapore}
}

\begin{document}

\maketitle
\footnotetext[1]{Correspondence to \texttt{richie.yeung@cs.ox.ac.uk}.}

\begin{abstract}
We consider the problem of synthesizing Clifford quantum circuits for devices with all-to-all qubit connectivity.
We approach this task as a reinforcement learning problem in which an agent learns to discover a sequence of elementary Clifford gates that reduces a given symplectic matrix representation of a Clifford circuit to the identity.
This formulation permits a simple learning curriculum based on random walks from the identity.
We introduce a novel neural network architecture that is equivariant to qubit relabelings of the symplectic matrix representation, and which is size-agnostic, allowing a single learned policy to be applied across different qubit counts without circuit splicing or network reparameterization.
On six-qubit Clifford circuits, the largest regime for which optimal references are available, our agent finds circuits within one two-qubit gate of optimality in milliseconds per instance, and finds optimal circuits in 99.2\% of instances within seconds per instance.
After continued training on ten-qubit instances, the agent scales to unseen Clifford tableaus with up to thirty qubits, including targets generated from circuits with over a thousand Clifford gates, where it achieves lower average two-qubit gate counts than Qiskit's Aaronson-Gottesman and greedy Clifford synthesizers.
\end{abstract}

\section{Introduction}

Quantum programs are executed as \emph{circuits}: sequences of elementary gates acting on qubits.
For any given high-level operation of interest, there are usually many different circuits that can implement it.
A quantum compiler finds a circuit that does so while optimizing for some cost function, typically related to the error rates of the gates used in the circuit.

A general quantum operation involving $n$ qubits is described by a $2^n \times 2^n$ complex matrix, and so finding an optimal circuit for a given operation is intractable in general.
As such, practitioners have focussed on important special cases of quantum circuits that admit more concise representations.
In this work, we consider the task of synthesizing \emph{Clifford circuits}, a general and important class of quantum circuits which can be represented tractably by $2n \times 2n$ binary matrices known as \emph{stabilizer tableaus} \cite{AaronsonGottesman2004}.
Figure~\ref{fig:clifford-examples} shows a four-qubit Clifford circuit together with its corresponding tableau representation.
The synthesis task is to implement a target Clifford operation as an explicit circuit using as few two-qubit (or \emph{entangling}) gates as possible, since these are typically much more expensive and error-prone than single-qubit gates \cite{Aseguinolaza2024,LitinskiOppen2018}.

Existing Clifford synthesis methods sit at two ends of a tradeoff between runtime and solution quality.
At one end, there exist polynomial-time algorithms such as Aaronson--Gottesman \cite{AaronsonGottesman2004}, which scale efficiently but often produce many more entangling gates than necessary. 
At the other end, stronger optimization and exact-synthesis methods produce much shorter circuits, but at heavy computational cost.
For example, template-based optimization can take hours per circuit \cite{BravyiTemplates2021}, while the optimal six-qubit database of \cite{Bravyi2022} required $2.1$ TB of storage and over 300,000 CPU-hours to generate. 

In this work, we pursue a middle ground between these two extremes, by learning a reusable neural heuristic for Clifford synthesis that is much lighter than exact or search-heavy optimization, but achieves substantially better entangling-gate counts than polynomial-time baselines.
Our approach also generalises to out-of-distribution targets larger and harder than those seen during training.

We approach this task via reinforcement learning, combining curriculum learning together with a novel neural network architecture that respects the symmetries of the problem.
For the six-qubit benchmark of Bravyi \textit{et al.}\ \cite{BravyiTemplates2021}, which is currently the largest regime where exact references are available, our policy solves every instance, reaches a maximum gap of one CZ gate across the full suite in $21$ seconds, and with extended search matches the exact optimum on $995/1003$ circuits ($99.2\%$) in three hours.
In comparison, the prior state-of-the-art method recovers $982/1003$ optima ($97.9\%$) after 217 hours; across the full reported time-limit sweep, the same 21 unrecovered circuits consume 576 hours without reaching optimum \cite{BravyiTemplates2021,BravyiTemplatesData2021}.

Our neural network architecture is size-agnostic, which allows its reuse across different qubit counts without circuit splicing or network reparameterization.
We exploit this to go beyond the exact regime of \cite{BravyiTemplates2021}.
In particular, after training on six-qubit instances, we continue training the same policy on ten-qubit instances.
The resulting model is able to synthesize Clifford circuits up to 30 qubits while using fewer CZ gates than Qiskit's standard Clifford synthesizers, Aaronson--Gottesman and the Bravyi \textit{et al.}\ greedy method \cite{AaronsonGottesman2004,QiskitCliffordPluginDocs,QiskitCliffordGreedyDocs}.
To our knowledge, our proposal is the first reinforcement learning method for synthesis of fully connected Clifford circuits, and the first learned policy to produce near-optimal synthesis on six-qubit circuits and moreover transfer to much larger targets.

Our paper aims to be accessible to machine learning researchers with no prior background in quantum computing.
In what follows, Section~\ref{sec:background} reviews Clifford synthesis from first principles; Section~\ref{sec:related-work} covers related work; Section~\ref{sec:method} describes our reinforcement-learning formulation and equivariant architecture; Sections~\ref{sec:experiments} cover our experiments, and Section~\ref{sec:discussion} discusses limitations.

\begin{figure*}[t]
\centering
\vspace{0.35em}
\makebox[0.88\textwidth][l]{%
\begin{minipage}[c][1.06in][c]{0.61\textwidth}
\centering
\includegraphics[width=\textwidth]{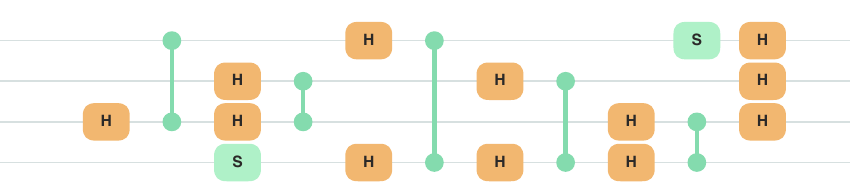}
\end{minipage}%
\hspace{0.03\textwidth}%
\begin{minipage}[c][1.06in][c]{0.26\textwidth}
\centering
\includegraphics[height=1.06in]{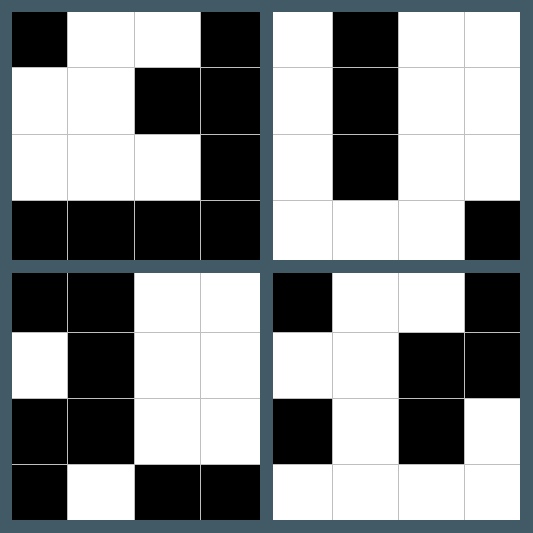}
\end{minipage}}
\caption{A four-qubit Clifford circuit with 5 CZ gates and its corresponding stabilizer tableau, shown as a $2n \times 2n$ binary matrix with black entries for 1 and white entries for 0.
Appendix Fig.~\ref{fig:appendix-clifford-tableau-rollout} shows how successive gate applications locally update this tableau until it reaches the identity.
}
\label{fig:clifford-examples}
\end{figure*}

\section{Background on Clifford Synthesis}
\label{sec:background}

In full generality, a quantum computation involving $n$ qubits requires a $2^n \times 2^n$ unitary matrix to represent.
This becomes intractably large very quickly as $n$ grows.
Accordingly, practitioners have developed more concise representations for important special cases of quantum circuits.
A key example is Clifford circuits \cite{Gottesman1998Heisenberg,AaronsonGottesman2004}.
By the Gottesman-Knill theorem \cite{Gottesman1998Heisenberg}, these correspond to stabilizer tableaus \cite{AaronsonGottesman2004}, whose essential phase-free content can be described succinctly as a $2n \times 2n$ binary symplectic matrix.\footnote{The precise correspondence is described in Appendix~\ref{app:clifford-tableaus}.}
A $2n \times 2n$ binary matrix $M$ is \emph{symplectic} if it satisfies
\[
M^T \Omega M = \Omega,
\qquad
\text{where }
\Omega=
\begin{bmatrix}
0 & I_n \\
I_n & 0
\end{bmatrix}.
\]
Here multiplication is meant over the field $\mathbb{F}_2$ (i.e.\ integers modulo 2), and $I_n$ denotes the $n \times n$ identity matrix.
Below we denote the set of symplectic $2n \times 2n$ binary matrices as $\Sp(2n,\mathbb{F}_2)$.
At a high level, the $(i, j)$-component of each of the four $n \times n$ quadrants of $M$ encodes a certain interaction between the $i$-th and $j$-th qubits of the corresponding Clifford circuit (see \cite{AaronsonGottesman2004} for details).

It is standard to show that $\Sp(2n, \mathbb{F}_2)$ forms a group under matrix multiplication.
Moreover, its group multiplication structure respects the composition of Clifford circuits.
In other words, the Clifford circuit obtained by ``plugging in'' the output of one Clifford circuit into another corresponds to the symplectic matrix obtained by multiplying the tableaus of those two circuits.

The symplectic group admits a collection of generators $\mathcal{G} \subseteq \Sp(2n, \mathbb{F}_2)$, which we denote below as
\begin{equation} \label{eq:clifford-generators}
    \mathcal{G} \coloneqq \underbrace{\{H_i : 1 \leq i \leq n\} \cup \{S_i :1\leq i\leq n\}}_{\text{Single-qubit gates}} \cup \underbrace{\{\mathrm{CZ}_{i,j}:1\leq i < j\leq n\}}_{\text{Two-qubit gates}}.
\end{equation}
In other words, every Clifford circuit corresponds to a product of these generators, and every product of these generators gives a valid Clifford circuit.
Informally, each index $i$ and $j$ refers to a qubit; each $H_i$ is obtained by applying a certain column swap to the identity matrix; and each $S_i$ and $\mathrm{CZ}_{i,j}$ are obtained by applying certain column additions to the identity matrix (see Figure \ref{fig:clifford-generator-actions} for an illustration).
Exact definitions of these generators are given in Appendix \ref{app:clifford-tableaus}.

In physical terms, the generators $H_i$, $S_i$, and $\mathrm{CZ}_{i,j}$ correspond to the native instruction set of a quantum computer.
In order to execute a Clifford circuit on hardware, it must therefore be decomposed into a sequence of these generators.
By using the symplectic matrix representation of Clifford circuits, the basic task of Clifford synthesis therefore becomes as follows:
\begin{equation} \label{eq:clifford-synthesis}
    \text{Given $M_\text{target} \in \Sp(2n, \mathbb{F}_2)$, find $G_1, \ldots, G_k \in \mathcal{G}$ such that $M_\text{target} = G_1 \cdots G_k$}.
\end{equation}
\begin{wrapfigure}{r}{0.50\textwidth}
\vspace{-1.0em}
\centering
\includegraphics[width=\linewidth]{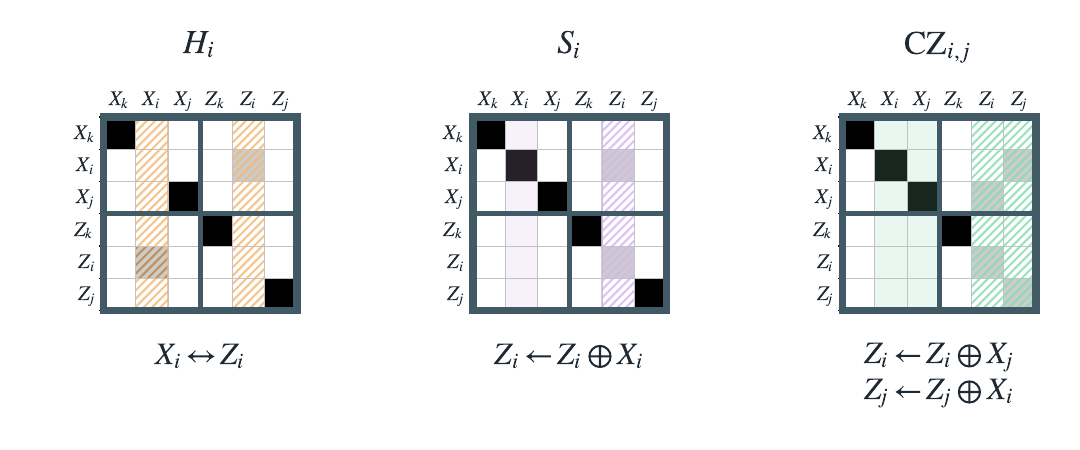}
\vspace{-2.0em}
\caption{Symplectic generator matrices $H_i$, $S_i$, and $\mathrm{CZ}_{i,j}$. Right-multiplying any symplectic matrix by one of these generators applies the highlighted column operations, leaving the other columns untouched.
}
\label{fig:clifford-generator-actions}
\vspace{-1.0em}
\end{wrapfigure}
In general, there are many different sequences $G_1 \cdots G_k$ of different lengths that produce the same overall tableau $M_\text{target}$.
As such, for practical purposes, it is desirable to solve \eqref{eq:clifford-synthesis} in a way that is in some sense efficient.
In particular, it is especially useful to find circuits with fewer two-qubit gates $\mathrm{CZ}_{i,j}$, as these are typically far more error-prone than the single-qubit gates $H_i$ and $S_i$. 
In this way, Clifford synthesis becomes a natural search problem, and therefore amenable to techniques from reinforcement learning, as we explore in this paper.

\section{Related Work}
\label{sec:related-work}
\paragraph{Why Clifford synthesis?}

Clifford operations are a core part of almost every quantum algorithm.
A classical result shows that every quantum computation can be expressed up to arbitrary precision using only Clifford gates ($H$, $S$ and $CZ$) and \emph{$T$ gates} \cite{Boykin2000,Selinger2015}, an additional kind of gate for which two consecutive applications are equivalent to an $S$ gate.
In the context of \emph{fault-tolerant quantum computation}, the implementation of the $T$ gate requires a costly procedure known as \emph{magic state distillation} \cite{BravyiKitaev2005,Litinski2019MagicStateDistillation}. Recently work on \emph{magic state cultivation} \cite{Gidney2024Cultivation} reduces this overhead to roughly match the overhead of a fault-tolerant two-qubit gate such as the $CZ$.
It is therefore practically relevant to reduce $CZ$ gate counts in Clifford circuits as we do here.

\paragraph{Non-neural synthesis.}

Most prior work on Clifford synthesis has focused on traditional, non-neural methods.
These typically involve a sharp trade-off between solution quality and runtime.
At one end of the spectrum, a variety of efficient polynomial-time synthesis procedures have been proposed \cite{AaronsonGottesman2004,BravyiTemplates2021}, which run quickly but often produce circuits far from the optimum.
As an example, Bravyi \textit{et al.} \cite{BravyiTemplates2021} demonstrate that the widely-used Aaronson--Gottesman algorithm produces Clifford circuits with up to 8 times more $CZ$ gates when compared to the output of their peephole optimization procedure, although this procedure can use up to 36 hours per circuit.
On the other hand, various search procedures have also been considered, which can achieve near-optimal synthesis but at a much higher computational cost, which limits their applicability to smaller circuits.
For example, while SAT-based methods can produce certifiably optimal solutions, they have only been applied successfully on 5 qubit Clifford circuits \cite{Peham2023,Shaik2025} and with partial success on 6--7 qubit instances \cite{Shaik2025} using a 3-hour timeout.
A$^\ast$ search has also been used to achieve optimal two-qubit counts for random 5 qubit Clifford tableaus but is outperformed by a greedy heuristic beyond 16 qubits \cite{Webster2025HeuristicOptimal}, both in terms of runtime and two-qubit gate count.

Our work targets the regime between these two extremes: we learn a reusable heuristic that is much lighter than exact or search-heavy optimization, but achieves substantially better entangling-gate counts than polynomial-time baselines.

\paragraph{Neural methods.}

Reinforcement learning has been applied to various related problems in quantum compilation. 
AlphaTensor \cite{Fawzi2022AlphaTensor} was adapted to $T$-count optimization, improving state-of-the-art results on quantum arithmetic circuits \cite{Ruiz2025AlphaTensorQuantum}.
Most closely related to our setting, Kremer \textit{et al.} \cite{kremer2024practical} apply reinforcement learning to the problem of \emph{constrained} Clifford synthesis, a setting where entangling gates can only be applied to nearest-neighbor qubits.
In contrast, we consider the fully unconstrained setting with all-to-all qubit connectivity.
This gives an $O(n^2)$ action space compared to the constrained setting of \cite{kremer2024practical}, whose action space is only $O(n)$.
The general Clifford synthesis task also admits stronger polynomial-time, heuristic, and exact baselines than the constrained setting, making it a more challenging and informative benchmark for learning-based methods.

Beyond these examples, reinforcement learning has also been applied to various tableau synthesis problems, including CNOT synthesis \cite{meijermasters,kremer2024practical,Cossio2026AlphaCNOT}, stabilizer state preparation \cite{Zen2023,Doherty2026GraphDecimation}, and Pauli-network synthesis \cite{Dubal2025}, all of which can be viewed as special cases of our setting with additional constraints on the target Clifford tableau.
Closely related work also applies learning and reinforcement learning to rewriting graphical representations such as ZX diagrams, which encode Clifford and non-Clifford circuit structure \cite{Yeung2023ZXTransformers,Nagele2024ZXRL,Riu2025ZXRL,Mattick2025ZXRL}.
However, current graph-based ZX simplification techniques \cite{Duncan2020GraphTheoreticZX,Kissinger2020ReducingNonClifford} mostly rewrite Clifford structure around fixed non-Clifford gates \cite{Simmons2021PauliFlow, Wetering2025ParametrisedCompilation}; reducing $T$ count requires separate optimization techniques \cite{Amy2016TCountReedMuller,Heyfron2017TODD,DeBeaudrap2020SpiderNest,Ruiz2025AlphaTensorQuantum}.
Compared to these related tasks, our work is the first to demonstrate robust generalization across unseen circuits of different sizes.
In contrast, Kremer \textit{et al.} \cite{kremer2024practical} train 31 models on 31 device architectures of up to 11 qubits for their task.
Since our method improves synthesis directly in the unrouted Clifford setting, the same symmetry-aware policy ideas may also be useful for these neighboring stabilizer-structured compilation problems.

\section{Method}
\label{sec:method}

We approach the problem \eqref{eq:clifford-synthesis} via reinforcement learning rather than supervised learning.
This is due to the combinatorial nature of the problem, which means exact optimal supervision is only available up to the six-qubit regime (i.e.\ the 1003 exact references of \cite{BravyiTemplates2021,Bravyi2022}), whereas we would like to scale beyond this to larger circuit sizes.

\subsection{Reverse-reduction formulation}

We begin by reformulating \eqref{eq:clifford-synthesis} in a way that is better suited to reinforcement learning.
By a standard result, the generators \eqref{eq:clifford-generators} are involutive, so we have $G^{-1} = G$ for each $G \in \mathcal{G}$.
This allows us to write \eqref{eq:clifford-synthesis} equivalently as follows:
\[
    \text{Given $M_\text{target} \in \Sp(2n, \mathbb{F}_2)$, find $G_1, \ldots, G_k \in \mathcal{G}$ such that $M_\text{target} G_1 \cdots G_k = I_{2n}$}.
\]
Given a solution to this latter problem, involutiveness then implies that $M_\text{target} = G_k \cdots G_1$ (i.e.\ the reversed sequence) solves the original problem.
Figure~\ref{fig:clifford-examples} gives an example for a four-qubit tableau, and Appendix Fig.~\ref{fig:appendix-clifford-tableau-rollout} shows the full reduction of that tableau to the identity.

This ``reverse-reduction'' viewpoint is well suited to reinforcement learning because it allows us to define each episode with the same goal state $I_{2n}$, rather than having this be a general tableau that changes across episodes.
As we discuss in Section \ref{sec:curriculum}, it also provides a natural curriculum, where the initial state of each episode is obtained via a random walk away from the identity, and the walk length controls the overall difficulty of that episode.

\subsection{Reinforcement learning setup}

We formulate Clifford synthesis as a deterministic Markov decision process $(\mathcal{S},\mathcal{A},\mathcal{T},\mathcal{R})$, where $\mathcal{S}$ is the state space, $\mathcal{A}$ is the action space, $\mathcal{T}:\mathcal{S}\times\mathcal{A}\to\mathcal{S}$ is the transition function, and $\mathcal{R}:\mathcal{S}\times\mathcal{A}\to\mathbb{R}$ is the reward function.
To simplify notation, we formulate everything with respect to a fixed qubit count that we denote by $n$, although we emphasize that the policy we will train is size-agnostic and can be applied across different qubit counts without reparameterization.

We take the state space to be the set of binary symplectic matrices of size $2n \times 2n$, i.e.\ $\mathcal{S} \coloneqq \Sp(2n,\mathbb{F}_2)$.
The action set is then defined as $\mathcal{A} \coloneqq \mathcal{G}$, where $\mathcal{G}$ is the set of generator matrices from \eqref{eq:clifford-generators}.
There are $\binom{n}{2}$ two-qubit actions and hence $|\mathcal{A}|=\frac{n(n+3)}{2}$ actions in total.
The transition function is
\[
    \mathcal{T}(M,G) \coloneqq M G.
\]
so each step applies one Clifford generator to the current tableau by right multiplication.
In our implementation, an episode terminates whenever the identity matrix $I_{2n}$ is reached, which corresponds to a successful synthesis, or when a step cap is reached, which corresponds to a failed rollout.

Recall that our goal is to synthesize Clifford circuits with as few gates as possible (and in particular, as few two-qubit gates as possible).
Accordingly, we define our reward function as follows:
\[
    \mathcal{R}(M,G) \coloneqq
    -\underbrace{0.01 \, \ind(\text{$G = H_i$ or $S_i$})}_{\text{single-qubit gate penalty}}
    -\underbrace{\ind(G = \mathrm{CZ}_{i,j})}_{\text{two-qubit gate penalty}}
    + \underbrace{25 \, \ind(M G = I_{2n})}_{\text{success bonus}}
    -\underbrace{\frac{\|M G - I_{2n}\|_0}{8n^2}}_{\text{progress shaping}},
\]
where $\ind$ denotes an indicator function and $\|\cdot\|_0$ counts the number of nonzero entries in a matrix. 
The single- and two-qubit gate penalty terms of the reward function correspond to the actual cost of the gates on an actual quantum computer. The success bonus term encourages the agent to avoid the undesirable local optimum of only ever applying single-qubit gates. 
Finally, the last term computes the Hamming distance of the current matrix from the identity, which provides a dense notion of progress throughout an episode.
Other reward weights and shapes may be possible here, but we found empirically that this particular choice is robust and works well in practice.

Overall, our goal is to find a policy $\pi$ to maximise the usual expected sum of discounted rewards over the episode, i.e.\ $\mathbb{E}_\pi[\sum_{t=0}^T \gamma^t R_t]$, where here $R_t$ denotes the reward received at time $t$; the variable $T$ denotes the length of the episode (capped at a maximum number of steps); and $\gamma \in [0,1)$ is a discount factor.
We allow $\pi$ to be stochastic, so that formally $\pi : \mathcal{S} \to \mathcal{P}\mathcal{A}$, where $\mathcal{P}\mathcal{A}$ denotes the set of all probability distributions over $\mathcal{A}$.
In our experiments, we train $\pi$ via PPO \cite{Schulman2017PPO}, and at evaluation time either greedily decode a fixed policy or run policy-guided rollouts (see Appendix~\ref{app:decoding-details}).

\subsection{Clifford synthesis symmetries}

The Clifford synthesis problem admits a natural notion of symmetry.
At a high level, this is because the underlying physical qubits are interchangeable, and so relabeling the qubits in a tableau should correspondingly relabel the qubits in the optimal action sequence.
In turn, these symmetries also translate into symmetries of the reinforcement learning problem itself, as we explain now.

Denote by $\mathrm{Sym}(n)$ the group of permutations of $n$ elements, i.e.\ its elements are bijections of the form $\sigma : \{1, \ldots, n\} \to \{1, \ldots, n\}$.
This group acts on the state space $\mathcal{S}$ as follows:
\[
    \sigma \cdot M \coloneqq \Pi_\sigma M \Pi_\sigma^T,
    \qquad
    \text{where } \Pi_\sigma \coloneqq
    \begin{bmatrix}
    P_\sigma & 0\\
    0 & P_\sigma
    \end{bmatrix},
\]
and $P_\sigma$ is the permutation matrix associated with $\sigma$.
Intuitively, this says that relabeling qubits simultaneously permutes the corresponding rows and columns of the block tableau, while preserving the $2 \times 2$ block structure structure within each qubit.
$\mathrm{Sym}(n)$ also acts on the action space $\mathcal{A}$ by
\[
    \sigma \cdot H_i \coloneqq H_{\sigma(i)},
    \qquad
    \sigma \cdot S_i \coloneqq S_{\sigma(i)},
    \qquad
    \sigma \cdot \mathrm{CZ}_{i,j} \coloneqq \mathrm{CZ}_{\sigma(i),\sigma(j)}.
\]
Intuitively, this says that after relabeling the qubits, the same physical gate is represented by the correspondingly relabeled qubit indices.
In turn, this leads to an action of $\mathrm{Sym}(n)$ on the space of distributions $\mathcal{P} \mathcal{A}$ via the pushforward (i.e.\ $(\sigma \cdot P)(a) \coloneqq P(\sigma^{-1} \cdot a)$ for $P \in \mathcal{P}\mathcal{A}$).

It is now straightforward to show that the transition function $\mathcal{T}$ and reward function $\mathcal{R}$ are respectively equivariant and invariant to these actions, in the sense that
\[
    \mathcal{T}(\sigma \cdot M, \sigma \cdot G) = \sigma \cdot \mathcal{T}(M,G),
    \qquad
    \mathcal{R}(\sigma \cdot M, \sigma \cdot G) = \mathcal{R}(M,G)
\]
always holds.
By a standard symmetry result for finite MDPs \cite{ZinkevichBalch2001SymmetryMDP}, this leads to the following result. (See also Appendix~\ref{app:equivariant-optimal-policy} for a proof in our notation.)

\begin{proposition}
    \label{prop:optimal-policy}
    There exists an optimal policy $\pi^\star : \mathcal{S} \to \mathcal{P}\mathcal{A}$ whose corresponding value function $V^\star : \mathcal{S} \to \mathbb{R}$ has the following equivariance and invariance properties respectively:
    \[
        \pi^\star(\sigma \cdot M) = \sigma \cdot \pi^\star(M),
        \qquad
        V^\star(\sigma \cdot M) = V^\star(M).
    \]
\end{proposition}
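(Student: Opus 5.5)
Our plan is to work with the optimal action-value function of the finite, deterministic, discounted MDP and then build an equivariant policy from it by symmetrizing over $\mathrm{Sym}(n)$. Because $\mathcal{S}=\Sp(2n,\mathbb{F}_2)$ and $\mathcal{A}=\mathcal{G}$ are finite and $\gamma<1$, the Bellman optimality operator
\[
(\mathcal{B}Q)(M,G) \coloneqq \mathcal{R}(M,G) + \gamma\,\max_{G'\in\mathcal{A}} Q(\mathcal{T}(M,G),G')
\]
is a $\gamma$-contraction in the sup norm. It therefore has a unique fixed point $Q^\star$, and we set $V^\star(M)=\max_G Q^\star(M,G)$. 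The identity terminal state can be handled either as absorbing with zero reward or by letting the max term vanish there; either convention keeps the operator equivariant. The step cap we will ignore, as it is an implementation detail of the idealised MDP.

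\textbf{Step 1: invariance of $Q^\star$.} For each $\sigma$, define the transformed function $(\sigma Q)(M,G)\coloneqq Q(\sigma^{-1}\cdot M,\sigma^{-1}\cdot G)$. Using $\mathcal{R}(\sigma\cdot M,\sigma\cdot G)=\mathcal{R}(M,G)$ and $\mathcal{T}(\sigma\cdot M,\sigma\cdot G)=\sigma\cdot\mathcal{T}(M,G)$, we check that $\mathcal{B}(\sigma Q)=\sigma(\mathcal{B}Q)$. The key point is that $G'\mapsto\sigma\cdot G'$ is a bijection of $\mathcal{A}$, so the inner max is unchanged under reindexing. Hence $\sigma Q^\star$ is also a fixed point of $\mathcal{B}$, and uniqueness gives $\sigma Q^\star=Q^\star$. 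This yields $Q^\star(\sigma\cdot M,\sigma\cdot G)=Q^\star(M,G)$, and taking the max over $G$ (again a bijection) gives $V^\star(\sigma\cdot M)=V^\star(M)$.

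\textbf{Step 2: an equivariant optimal policy.} Let $A^\star(M)\coloneqq\arg\max_G Q^\star(M,G)$. By Step 1, $A^\star(\sigma\cdot M)=\sigma\cdot A^\star(M)$. We then define $\pi^\star(M)$ to be the uniform distribution on $A^\star(M)$. Equivariance follows immediately, since the pushforward of a uniform distribution on a set under the bijection $\sigma$ is uniform on the image set:
\[
\pi^\star(\sigma\cdot M)(a)=\frac{\ind(a\in\sigma\cdot A^\star(M))}{|A^\star(M)|}=\pi^\star(M)(\sigma^{-1}\cdot a)=(\sigma\cdot\pi^\star(M))(a).
\]
Optimality is the standard fact that any policy supported on greedy actions with respect to $Q^\star$ attains $V^\star$. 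Its value function is then $V^\star$, which is invariant by Step 1.

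The main obstacle is not the symmetry computation but that a naive choice fails. Picking one greedy action per state, say the lexicographically first, breaks equivariance whenever there are ties, and ties do occur, for example on states with nontrivial stabilizer in $\mathrm{Sym}(n)$. Symmetrizing by taking the uniform distribution over all maximizers resolves this, and it is the reason the statement allows stochastic policies. A secondary point is checking that $\sigma\cdot M$ stays in $\Sp(2n,\mathbb{F}_2)$ and that $\sigma\cdot I_{2n}=I_{2n}$, so that the terminal condition is respected. Both follow because $\Pi_\sigma$ is itself symplectic and orthogonal.
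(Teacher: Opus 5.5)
Your proposal is correct and follows essentially the same route as the paper. Uniqueness of the Bellman optimality fixed point gives the symmetry: the paper applies it to the $V$-equation and then deduces $Q^\star(\sigma\cdot M,\sigma\cdot G)=Q^\star(M,G)$, while you apply it to the $Q$-operator directly. The uniform distribution over $\arg\max_G Q^\star(M,G)$ then gives the equivariant optimal policy.
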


\subsection{Network Architecture}

We solve our reinforcement learning problem using PPO \cite{Schulman2017PPO}, which requires us to parameterize a neural network $\mathcal{S} \to \mathcal{P}\mathcal{A} \times \mathbb{R}$ that maps states to pairs of action distributions and value estimates.
In light of Proposition \ref{prop:optimal-policy}, we would like these outputs to be respectively equivariant and invariant to the symmetries of the problem.
To achieve this, we design a novel neural network architecture that respects these symmetries by construction.
At a high level, the architecture we use is as follows:
\[
\mathcal{S}
\xrightarrow{\text{Embed}}
\underbrace{\mathbb{R}^{n \times n \times h}}_{\text{Edge features}}
\xrightarrow{\text{Aggregate}}
\underbrace{\mathbb{R}^{n \times h}}_{\text{Node features}} \times \underbrace{\mathbb{R}^{n \times n \times h}}_{\substack{\text{Edge features} \\ \text{(unchanged)}}}
\xrightarrow{\substack{\text{Message}\\ \text{passing}}}
\underbrace{\mathbb{R}^{n \times h}}_{\substack{\text{Node features}\\\text{(updated)}}} \times \underbrace{\mathbb{R}^{n \times n \times h}}_{\substack{\text{Edge features} \\ \text{(unchanged)}}}
\xrightarrow{\text{Readout}}
\mathcal{P}\mathcal{A} \times \mathbb{R}
\]
We explain each of the layers in turn now.

\paragraph{Embedding.}

Recall that tableaus are $2n \times 2n$ binary matrices.
Given an input tableau $M_{\text{in}} \in \mathcal{S}$, we begin by reshaping it to an $n \times n \times 4$ binary tensor $M$ according to its four $n \times n$ quadrants (see Figure \ref{fig:clifford-architecture}).
From a physical perspective, each entry $M_{ij} \in \{0, 1\}^4$ then encodes the directed interactions from the $i$-th to the $j$-th qubit in the corresponding Clifford circuit.
We then map each $M_{ij}$ to an embedding in $\mathbb{R}^h$ using an embedding dictionary with $2^5 = 32$ entries, where the dictionary key is given by the four bits in $M_{ij}$ together with a binary indicator $\ind(i = j)$ for whether the entry encodes a self-interaction.
We denote the resulting embeddings by $e_{ij} \in \mathbb{R}^h$.

\paragraph{Aggregation.}

To initialize our message passing, we compute an aggregate embedding $q_i^{(0)} \in \mathbb{R}^h$ for each qubit by mean and max pooling the embeddings $e_{ij}$ over the rows and columns of the tableau.
We give precise details of this pooling in Appendix \ref{app:architecture-details}.
It is also possible to initialise message passing more simply with $q_i^{(0)} = 0$, although we found that aggregation performed better empirically.

\paragraph{Message passing.}

After aggregation, we perform $L$ rounds of message passing, where in each round we compute ordered-pair messages $m_{i \leftarrow j}^{(k)}$ from qubit $j$ to the updated qubit $i$ as follows:
\[
    m_{i \leftarrow j}^{(k)} \coloneqq \phi_{\mathrm{msg}}(q_i^{(k)}, q_j^{(k)}, e_{ij}, e_{ji}, \ind(i \neq j)) \in \mathbb{R}^h,
\]
where $\phi_{\mathrm{msg}}$ is a shared MLP.
Here the indicator $\ind(i \neq j)$ allows the message function to treat diagonal entries differently from off-diagonal ones, which has physical significance since the former encode self-interactions of qubits rather than interactions between distinct qubits.
Given these messages, we then update the qubit embeddings as follows:
\[
    q_i^{(k+1)} \coloneqq \operatorname{LayerNorm}\left(q_i^{(k)} + \phi_{\mathrm{upd}}\left(q_i^{(k)}, \operatorname{mean}_j m_{i \leftarrow j}^{(k)}, \operatorname{max}_j m_{i \leftarrow j}^{(k)}, \eta_{i,1}\right)\right),
\]
where $\eta_{i,1}$ denotes additional \emph{rank-based count features} (see Appendix \ref{app:architecture-details} for a definition).
The latter are motivated by work by \cite{Webster2025HeuristicOptimal}, who show high correlation between these features and the optimal 2-qubit gate count of the related problem of \emph{state preparation}.

\paragraph{Readout.}

After $L$ rounds of message passing, we get final qubit embeddings $q_i \coloneqq q_i^{(L)}$ for each qubit $i$.
We then form a global summary $g \in \mathbb{R}^{h'}$ by computing permutation invariant summary statistics of these final embeddings and their associated count features (see Appendix \ref{app:architecture-details} for details).
To obtain the final action distribution, we then compute
\[
    x_{i} \coloneqq \phi_{\mathrm{H}}(q_i, e_{ii}, g) \in \mathbb{R}
    \qquad
    y_{i} \coloneqq \phi_{\mathrm{S}}(q_i, e_{ii}, g) \in \mathbb{R}
    \qquad
    z_{ij} \coloneqq \phi_{\mathrm{CZ}}(q_i, q_j, e_{ij}, e_{ji}, g) \in \mathbb{R},
\]
where $\phi_{\mathrm{H}}$, $\phi_{\mathrm{S}}$, and $\phi_{\mathrm{CZ}}$ are MLPs.
We then take the collection of every $x_i$, $y_i$, and $z_{ij}$ to be the logits for the policy over $\mathcal{A}$.
This guarantees that the same logit is produced for $\mathrm{CZ}_{i,j}$ and $\mathrm{CZ}_{j,i}$, as required by the symmetry of these actions.
For the value function estimate, we return $\phi_{V}(g)$, where $\phi_V$ is another MLP.

\paragraph{Equivariance and invariance.}

Apart from the value head of the readout layer, each of the individual layers above is equivariant with respect to permutations of the qubits.
For example, it is straightforward to check that relabelling the qubits in the input tableau results in a corresponding relabeling of the embeddings $e_{ij}$, which is then preserved by the aggregation and message-passing layers, and finally results in a corresponding relabeling of the action logits.
Since the value head only depends on the global features $g$, it is itself invariant to qubit relabeling.

\paragraph{Size-agnosticism.}
The same design is also size-agnostic: the only learned weights are in the embedding dictionary and the MLPs, and nothing in the way these are applied depends on the number of qubits $n$.
Accordingly, the same learned weights can be reused across different qubit counts.

\begin{figure*}[t]
\centering
\includegraphics[width=0.98\textwidth]{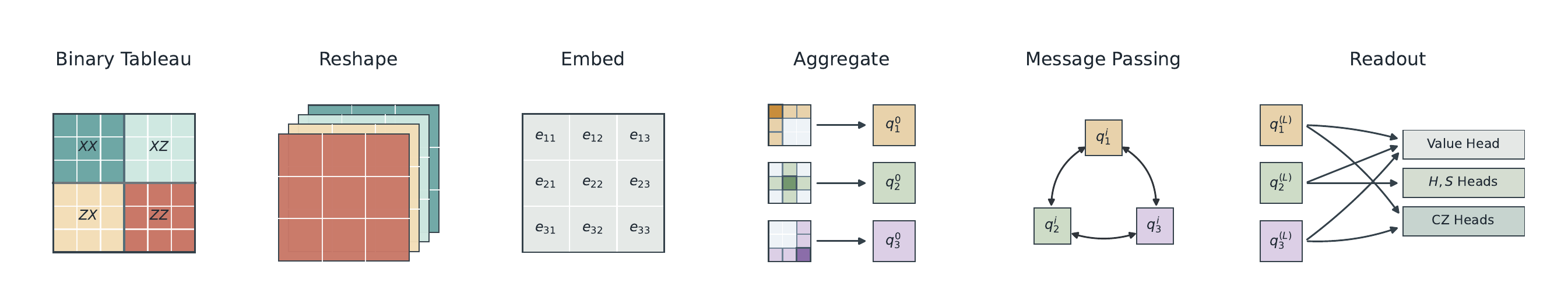}
\par\smallskip
\caption{Architecture of the permutation-equivariant policy used for Clifford synthesis. The input tableau is reshaped and embedded into an $n\times n$ grid of qubit-pair edge features, aggregated into $n$ qubit tokens, updated by edge-conditioned message passing on the complete qubit graph, and passed to equivariant policy heads for $H$, $S$, and $\mathrm{CZ}$ actions, alongside an invariant value head.}
\label{fig:clifford-architecture}
\end{figure*}

\subsection{Curriculum} \label{sec:curriculum}

We have not yet defined how the initial tableau $M_{\mathrm{target}}$ for each episode is generated.
The number of valid tableaus grows as $|\Sp(2n,\mathbb{F}_2)| = 2^{n^2}\prod_{i=1}^n(4^i-1) = 2^{\Theta(n^2)}$ \cite{Bravyi2022}.
As such, if we obtained $M_{\mathrm{target}}$ by sampling uniformly, we would obtain very difficult targets with high probability, which would lead to very sparse rewards early in training.
We therefore use curriculum learning \cite{Bengio2009CurriculumLearning,Narvekar2020CurriculumSurvey} inspired by reverse curriculum generation for sparse-reward goal-reaching problems \cite{Florensa2017ReverseCurriculum} and DeepCubeA's reverse-from-goal training for combinatorial puzzles \cite{Agostinelli2019DeepCubeA}.
Kremer \textit{et al.} also use curriculum learning for constrained Clifford synthesis \cite{kremer2024practical}.

In our approach, each episode target is generated by a random walk from the identity of some fixed length.
The walk length therefore controls the distance from the goal and hence the difficulty.
Training starts with short walks and advances to longer walks once rollout success at the current level reaches 100\%.
Precisely, for $n \in \mathbb{N}$ and $d \geq 0$, we define $\mathcal{P}_{n, d}$ to be the distribution over $\Sp(2n, \mathbb{F}_2)$ generated by a random walk from $I_{2n}$ of length
\[
    L = \lfloor d \rfloor + B,
    \qquad
    \text{where } B \sim \operatorname{Bernoulli}(d-\lfloor d \rfloor).
\]
This means that $\mathbb{E}[L] = d$, and moreover when $d$ is an integer, it reduces to a $d$-step random walk.
This interpolation gives more granular control over the curriculum, especially for small $d$, and shifts the distribution smoothly.
For $n \geq 3$, this random walk converges to uniform sampling from $\Sp(2n,\mathbb{F}_2)$ as $d\to\infty$ (see Theorem~\ref{thm:random-walk-limit} in Appendix~\ref{app:random-walk-limit} for a proof).

In addition to $d$, our size-agnostic agent also allows us to build a curriculum across different qubit counts $n$.
In our experiments, we exploit this by training initially on six-qubit circuits until convergence, before continuing to train on ten-qubit circuits.

\section{Experimental Results}
\label{sec:experiments}

Using the setup described above, we optimized the parameters of our model via PPO \cite{Schulman2017PPO} using PufferLib \cite{Suarez2024PufferLib}.
Full training hyperparameters are listed in Appendix Table~\ref{tab:appendix-training-hparams}.
At evaluation time, we then use our policy as a heurstic for Clifford synthesis, either by greedily following the action with highest probability at each step, or by sampling actions according to the policy until a solution is found or a step cap is reached (see Appendix~\ref{app:decoding-details} for details).

\subsection{Recovering Optimal Six-Qubit Clifford Circuits}

We applied our approach to the benchmark of 1003 optimal six-qubit tableaus considered by Bravyi \textit{et al.} \cite{BravyiTemplates2021}.
(These reference tableaus were themselves taken from the exhaustive database of optimal six-qubit Clifford circuits generated by Bravyi--Latone--Maslov \cite{Bravyi2022}.)
However, we emphasize that our agent is trained according to the methodology described in Section \ref{sec:method}, without any access to those 1003 tableaus or their optimal solutions.
Even the easiest benchmark targets are extremely unlikely to be seen during training, since there are about $1.3\times 10^{14}$ optimal six-qubit Cliffords with five $CZ$ gates \cite{Bravyi2022}.
As a baseline, we compared against the symbolic peephole optimizer of \cite{BravyiTemplates2021}, which matches the optimal entangling-gate count for 982/1003 circuits after 217 hours, with the 21 remaining circuits consuming 576 hours without reaching the optimum \cite{BravyiTemplatesData2021}.

\begin{wrapfigure}{r}{0.40\textwidth}
\vspace{-1.0em}
\centering
\includegraphics[width=\linewidth]{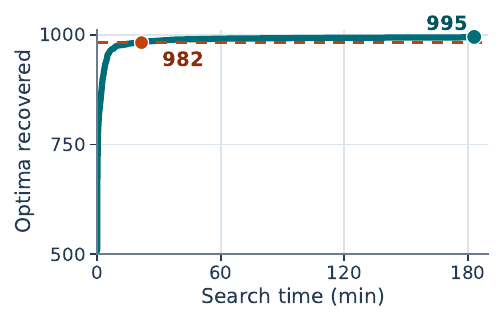}
\vspace{-1.0em}
\end{wrapfigure}

On this task, our approach solves each of the 1003 instances and matches the optimum on 995/1003 circuits ($99.2\%$).
The remaining eight circuits are each off by one CZ gate.
Greedily following the trained policy takes $0.83$ seconds for the whole suite and already recovers 507/1003 optimal solutions.
A brief policy-guided search pass brings all the circuits to within one $CZ$ gate from optimal in $21$ seconds total, with 610/1003 of those being optimal.
The longer policy-guided search reaches the previous state-of-the-art $982/1003$ in $22$ minutes, reaches $990/1003$ after $52$ minutes, and reaches $995/1003$ after $183$ minutes, outperforming the previous state-of-the-art of Bravyi et al.\ \cite{BravyiTemplates2021} by a significant margin.

\subsection{Generalizing Beyond the Training Size}
\label{sec:setup-clifford-transfer}

We also evaluated our approach on larger circuits beyond six qubits.
Since there are no optimal references available in this context, we compare our agent's performance against existing synthesis algorithms that run in polynomial time.
Specifically, we compare against the two main polynomial-time Clifford synthesizers from Qiskit \cite{AaronsonGottesman2004,QiskitCliffordPluginDocs,QiskitCliffordGreedyDocs}, a quantum computing library with over a million monthly downloads at the time of writing.
We do not compare against the neural methods discussed in Section~\ref{sec:related-work}, as they do not address our specific task, and moreover are not size-agnostic and hence would require training separate models for each qubit count in our sweep.

To test size generalization and the cross-size curriculum, we benchmarked two checkpoints: one trained only on six-qubit circuits, and one for which we subsequently continued training on ten-qubit circuits also.
For each qubit size $n$ and number of initial gates $d$, we generated 100 held-out targets by applying $d$ uniformly sampled Clifford gates from $\mathcal{G}$ to the identity tableau.
We swept across $n=7,\ldots,30$ and $d\in\{16,32,64,128,256,512,1024, \infty\}$, using the same fixed target tableaus for the learned agent and both Qiskit baselines.
The $\infty$ circuits are actually sampled uniformly \cite{QiskitRandomCliffordDocs} from $\Sp(2n, \mathbb{F}_2)$ as that is the limit distribution of the random walk (see Theorem \ref{thm:random-walk-limit}).

The results of this experiment are given in Figure~\ref{fig:10fc-difficulty-grid} (see also Appendix~\ref{fig:appendix-10fc-difficulty-grid-full} and \ref{fig:appendix-6fc-difficulty-grid-full} for the full sweeps).
The model trained on ten qubits produces much shorter circuits when it succeeds.
Across the finite-difficulty settings, the model returns lower average CZ counts than the Bravyi \textit{et al.} greedy and Aaronson--Gottesman algorithms.
At 30 qubits and 1024 initial Clifford gates, where the learned synthesizer still solves all targets, it uses $323.3$ CZ gates on average, $124.2$ fewer than Qiskit's Bravyi \textit{et al.} greedy synthesizer and $460.1$ fewer than Aaronson--Gottesman.

At the $\infty$ endpoint, the ten-qubit checkpoint starts to lose reliability beyond about 24 qubits: solve rate falls from $99\%$ at 24 qubits to $59\%$ at 30 qubits. The learned model still beats both Qiskit baselines in CZ count after restricting all methods to its solved targets. The appendix gives the full sweeps (Figures~\ref{fig:appendix-10fc-three-method-cz} and \ref{fig:appendix-10fc-three-method-total-full}).
\begin{figure*}[t]
\centering
\includegraphics[width=0.98\textwidth]{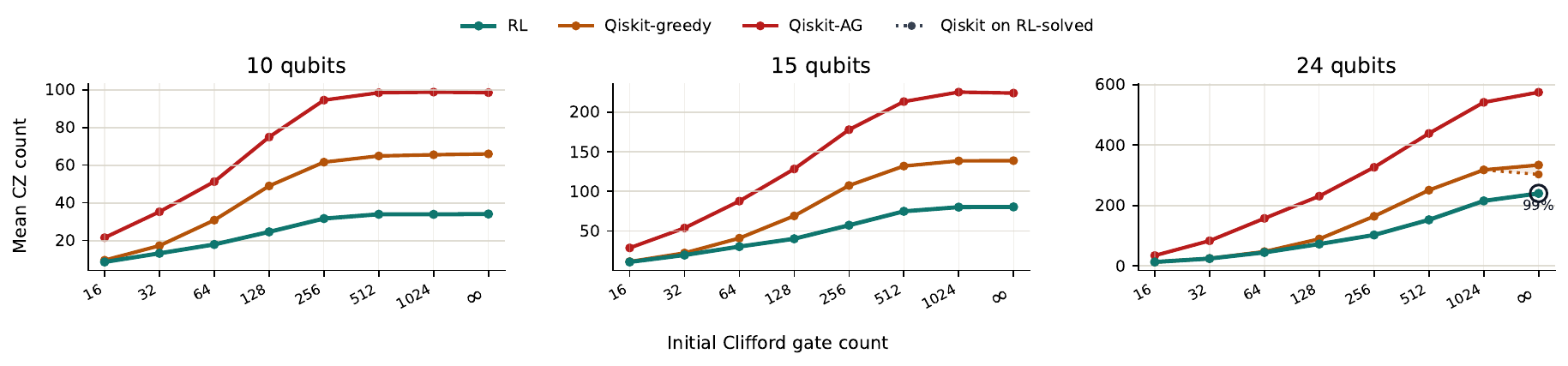}
\caption{CZ-count comparison for the ten-qubit-trained model at 10, 15, and 24 evaluation qubits. Each panel fixes the qubit count and sweeps target difficulty from 16--1024 initial Clifford gates through the $\infty$ endpoint defined in Section~\ref{sec:setup-clifford-transfer}. Solid Qiskit curves show all-target means; dotted same-color markers and shaded gaps show the corresponding Qiskit means restricted to targets solved by the agent when these differ. The percentage annotation gives the agent solve rate at the $\infty$ endpoint. Appendix Figure~\ref{fig:appendix-10fc-difficulty-grid-full} shows the full evaluation on 7--30 qubits.}
\label{fig:10fc-difficulty-grid}
\end{figure*}

Interestingly, these scale-transfer experiments expose a reliability--quality tradeoff between the two checkpoints.
The six-qubit trained model is more reliable: it solves every larger-circuit target in our sweep, including the $\infty$ endpoint.
Its circuits are worse, however, especially on tableaus corresponding to larger and deeper circuits.
See Appendix~\ref{app:sixfc-extrapolation} for additional discussion.

\subsection{Architecture Ablation}
\label{sec:results-clifford-ablation}

We ablate the policy representation by progressively removing qubit-level structure. Ours uses the message-passing architecture from Section~\ref{sec:method}; RelTransformer replaces message passing with relation-aware attention~\cite{Shaw2018RelativeAttention}; Transformer uses ordinary self-attention~\cite{Vaswani2017Attention}; MLP keeps equivariant per-qubit processing but removes communication between qubits; and FlatMLP uses an unconstrained MLP over the flattened tableau.

\begin{table}%
\centering
\small
\begin{tabular}{lc}
\toprule
Family & $CZ$ Count\\
\midrule
Ours & $\mathbf{13.49\pm0.01}$\\
RelTransformer~\cite{Shaw2018RelativeAttention} & $\mathbf{13.47\pm0.06}$\\
Transformer~\cite{Vaswani2017Attention} & $14.12\pm0.28$\\
MLP & $15.88\pm0.05$\\
FlatMLP & $15.19\pm0.16$\\
\bottomrule
\end{tabular}

\vspace{1.0em}
\caption{Architecture ablation on $100$ uniformly sampled six-qubit Clifford targets. Entries are mean CZ counts over solved targets under rollout decoding, reported as mean $\pm$ standard deviation over three seeds. All models solve every target. The full sweep is in Appendix~Table~\ref{tab:appendix-clifford-architecture-ablation}.}
\label{tab:clifford-architecture-ablation}
\end{table}

Table~\ref{tab:clifford-architecture-ablation} shows model performance on uniformly random Cliffords. Since all models solve all targets, the comparison is entirely in circuit quality. The communication-based equivariant models perform best: Ours and RelTransformer are essentially tied, ordinary attention is weaker, removing communication is worse, and the flat unstructured baseline is also substantially behind. The best communication-based models use more than two fewer CZ gates than MLP and about two fewer than FlatMLP. This suggests that the important ingredient is not the exact message-passing update, but the combination of tableau-block features, qubit equivariance, and information flow between qubit tokens.

\FloatBarrier

\section{Discussion, Limitations, and Future Work}
\label{sec:discussion}

We presented a neural synthesis algorithm for Clifford circuits based on reinforcement learning over stabilizer-tableau reduction. The central architectural idea is to build permutation symmetry directly into the policy so that the model can scale across qubit sizes. 
We have shown that this setup is robust, giving optimal circuits at six qubits and improving over existing polynomial-time algorithms for larger qubit systems.

The symmetries and two-dimensional learning curricula we study here exist in related problems, and it is an open question whether these techniques developed in this work would yield significant improvements in those areas as well. Future work can also explore the optimal way to perform this two-dimensional curriculum learning.

The learned policy is small and relatively fast, and can be used on consumer devices without a GPU. In the future, this can be packaged into a lightweight library for neural Clifford synthesis.

A limitation of our current formulation is that one-step action selection becomes expensive for large $n$: the number of possible $\mathrm{CZ}_{i,j}$ actions grows as $O(n^2)$, and the resulting rollouts can require $O(n^2)$ gates. This makes naive inference scale poorly even when the network itself is size-agnostic. Future work could reduce this cost with a factorized action space, hierarchical actions that choose structured multi-gate moves, or policy-guided planning methods such as Monte Carlo tree search during training and inference.

See \href{https://y-richie-y.github.io/clifford/}{\texttt{y-richie-y.github.io/clifford/}} for a live browser demo and recovered Bravyi \textit{et al.} benchmark circuits, including new circuits for previously unrecovered optima.

\section*{Acknowledgements}

We thank Christopher Mingard, Jasmine Brewer, and Ciarán Ryan-Anderson for useful feedback on the manuscript. We thank Alexander Koziell-Pipe for interesting discussions and for exploring Monte Carlo tree search for circuit synthesis. RY thanks Dr Simon Harrison for generous support through the Wolfson Harrison UKRI Quantum Foundation Scholarship.
This work is supported by the Engineering and Physical Sciences Research Council grant number EP/Z002230/1, \textit{(De)constructing quantum software (DeQS)}.
The research of RC is supported by the National Research Foundation, Singapore, under its National Research Foundation Fellowship in Artificial Intelligence (Award No.: NRFFIAI1-2024-0014).

\bibliography{refs}

\begin{thebibliography}{48}
\providecommand{\natexlab}[1]{#1}
\providecommand{\url}[1]{\texttt{#1}}
\expandafter\ifx\csname urlstyle\endcsname\relax
  \providecommand{\doi}[1]{doi: #1}\else
  \providecommand{\doi}{doi: \begingroup \urlstyle{rm}\Url}\fi

\bibitem[Aaronson and Gottesman(2004)]{AaronsonGottesman2004}
Scott Aaronson and Daniel Gottesman.
\newblock Improved simulation of stabilizer circuits.
\newblock \emph{Physical Review A}, 70\penalty0 (5), 2004.
\newblock \doi{10.1103/PhysRevA.70.052328}.
\newblock URL \url{https://doi.org/10.1103/physreva.70.052328}.

\bibitem[Agostinelli et~al.(2019)Agostinelli, McAleer, Shmakov, and
  Baldi]{Agostinelli2019DeepCubeA}
Forest Agostinelli, Stephen McAleer, Alexander Shmakov, and Pierre Baldi.
\newblock Solving the rubik’s cube with deep reinforcement learning and
  search.
\newblock \emph{Nature Machine Intelligence}, 1\penalty0 (8):\penalty0
  356--363, 2019.
\newblock \doi{10.1038/s42256-019-0070-z}.
\newblock URL \url{https://doi.org/10.1038/s42256-019-0070-z}.

\bibitem[Amy and Mosca(2019)]{Amy2016TCountReedMuller}
Matthew Amy and Michele Mosca.
\newblock T-count optimization and reed–muller codes.
\newblock \emph{IEEE Transactions on Information Theory}, 65\penalty0
  (8):\penalty0 4771--4784, 2019.
\newblock \doi{10.1109/TIT.2019.2906374}.
\newblock URL \url{https://doi.org/10.1109/tit.2019.2906374}.

\bibitem[Aseguinolaza et~al.(2024)Aseguinolaza, Sobrino, Sobrino,
  Jornet-Somoza, and Borge]{Aseguinolaza2024}
Unai Aseguinolaza, Nahual Sobrino, Gabriel Sobrino, Joaquim Jornet-Somoza, and
  Juan Borge.
\newblock Error estimation in current noisy quantum computers.
\newblock \emph{Quantum Information Processing}, 23\penalty0 (5), 2024.
\newblock \doi{10.1007/s11128-024-04384-z}.
\newblock URL \url{https://doi.org/10.1007/s11128-024-04384-z}.

\bibitem[Bengio et~al.(2009)Bengio, Louradour, Collobert, and
  Weston]{Bengio2009CurriculumLearning}
Yoshua Bengio, Jérôme Louradour, Ronan Collobert, and Jason Weston.
\newblock Curriculum learning.
\newblock In \emph{Proceedings of the 26th Annual International Conference on
  Machine Learning}, pages 41--48, 2009.
\newblock \doi{10.1145/1553374.1553380}.
\newblock URL \url{https://doi.org/10.1145/1553374.1553380}.

\bibitem[Boykin et~al.(2000)Boykin, Mor, Pulver, Roychowdhury, and
  Vatan]{Boykin2000}
P.~Oscar Boykin, Tal Mor, Matthew Pulver, Vwani Roychowdhury, and Farrokh
  Vatan.
\newblock A new universal and fault-tolerant quantum basis.
\newblock \emph{Information Processing Letters}, 75\penalty0 (3):\penalty0
  101--107, 2000.
\newblock \doi{10.1016/S0020-0190(00)00084-3}.
\newblock URL \url{https://doi.org/10.1016/s0020-0190(00)00084-3}.

\bibitem[Bravyi and Kitaev(2005)]{BravyiKitaev2005}
Sergey Bravyi and Alexei Kitaev.
\newblock Universal quantum computation with ideal {Clifford} gates and noisy
  ancillas.
\newblock \emph{Physical Review A}, 71\penalty0 (2), 2005.
\newblock \doi{10.1103/PhysRevA.71.022316}.
\newblock URL \url{https://doi.org/10.1103/physreva.71.022316}.

\bibitem[Bravyi et~al.(2021{\natexlab{a}})Bravyi, Shaydulin, Hu, and
  Maslov]{BravyiTemplates2021}
Sergey Bravyi, Ruslan Shaydulin, Shaohan Hu, and Dmitri Maslov.
\newblock {Clifford} circuit optimization with templates and symbolic {Pauli}
  gates.
\newblock \emph{Quantum}, 5:\penalty0 580, 2021{\natexlab{a}}.
\newblock \doi{10.22331/q-2021-11-16-580}.
\newblock URL \url{https://doi.org/10.22331/q-2021-11-16-580}.

\bibitem[Bravyi et~al.(2021{\natexlab{b}})Bravyi, Shaydulin, Hu, and
  Maslov]{BravyiTemplatesData2021}
Sergey Bravyi, Ruslan Shaydulin, Shaohan Hu, and Dmitri Maslov.
\newblock Data to accompany {Clifford Circuit Optimization with Templates and
  Symbolic Pauli Gates}, 2021{\natexlab{b}}.
\newblock URL
  \url{https://github.com/rsln-s/Clifford_Circuit_Optimization_with_Templates_and_Symbolic_Pauli_Gates}.
\newblock GitHub repository, accessed 2026-05-03.

\bibitem[Bravyi et~al.(2022)Bravyi, Latone, and Maslov]{Bravyi2022}
Sergey Bravyi, Joseph~A. Latone, and Dmitri Maslov.
\newblock 6-qubit optimal {Clifford} circuits.
\newblock \emph{npj Quantum Information}, 8\penalty0 (1), 2022.
\newblock \doi{10.1038/s41534-022-00583-7}.
\newblock URL \url{https://doi.org/10.1038/s41534-022-00583-7}.

\bibitem[Charton et~al.(2023)Charton, Krajenbrink, Meichanetzidis, and
  Yeung]{Yeung2023ZXTransformers}
Francois Charton, Alexandre Krajenbrink, Konstantinos Meichanetzidis, and
  Richie Yeung.
\newblock Teaching small transformers to rewrite {ZX} diagrams.
\newblock In \emph{3rd MATH-AI Workshop at NeurIPS'23}, 2023.
\newblock URL \url{https://mathai2023.github.io/papers/34.pdf}.

\bibitem[Cossio et~al.(2026)Cossio, Bosco, Romanello, Serra, and
  Piazza]{Cossio2026AlphaCNOT}
Jacopo Cossio, Daniele~Lizzio Bosco, Riccardo Romanello, Giuseppe Serra, and
  Carla Piazza.
\newblock {AlphaCNOT}: Learning {CNOT} minimization with model-based planning,
  2026.
\newblock URL \url{https://arxiv.org/abs/2604.13812v1}.

\bibitem[de~Beaudrap et~al.(2020)de~Beaudrap, Bian, and
  Wang]{DeBeaudrap2020SpiderNest}
Niel de~Beaudrap, Xiaoning Bian, and Quanlong Wang.
\newblock Fast and effective techniques for {T-Count} reduction via spider nest
  identities.
\newblock In \emph{15th Conference on the Theory of Quantum Computation,
  Communication and Cryptography (TQC 2020)}, volume 158, pages 11:1--11:23.
  Schloss Dagstuhl -- Leibniz-Zentrum für Informatik, 2020.
\newblock \doi{10.4230/LIPIcs.TQC.2020.11}.
\newblock URL
  \url{https://drops.dagstuhl.de/entities/document/10.4230/LIPIcs.TQC.2020.11}.

\bibitem[Doherty et~al.(2026)Doherty, Puviani, Brewer, Matos, Amaro, Criger,
  and Stephen]{Doherty2026GraphDecimation}
Michael Doherty, Matteo Puviani, Jasmine Brewer, Gabriel Matos, David Amaro,
  Ben Criger, and David~T. Stephen.
\newblock Fast stabilizer state preparation via {AI}-optimized graph
  decimation, 2026.
\newblock URL \url{https://arxiv.org/abs/2603.17743v1}.

\bibitem[Dubal et~al.(2025)Dubal, Kremer, Martiel, Villar, Wang, and
  Cruz-Benito]{Dubal2025}
Ayushi Dubal, David Kremer, Simon Martiel, Victor Villar, Derek Wang, and Juan
  Cruz-Benito.
\newblock {Pauli} network circuit synthesis with reinforcement learning, 2025.
\newblock URL \url{https://arxiv.org/abs/2503.14448v1}.

\bibitem[Duncan et~al.(2020)Duncan, Kissinger, Perdrix, and van~de
  Wetering]{Duncan2020GraphTheoreticZX}
Ross Duncan, Aleks Kissinger, Simon Perdrix, and John van~de Wetering.
\newblock Graph-theoretic simplification of quantum circuits with the
  {ZX}-calculus.
\newblock \emph{Quantum}, 4:\penalty0 279, 2020.
\newblock \doi{10.22331/q-2020-06-04-279}.
\newblock URL \url{https://doi.org/10.22331/q-2020-06-04-279}.

\bibitem[Fawzi et~al.(2022)Fawzi, Balog, Huang, Hubert, Romera-Paredes,
  Barekatain, Novikov, Ruiz, Schrittwieser, Swirszcz, Silver, Hassabis, and
  Kohli]{Fawzi2022AlphaTensor}
Alhussein Fawzi, Matej Balog, Aja Huang, Thomas Hubert, Bernardino
  Romera-Paredes, Mohammadamin Barekatain, Alexander Novikov, Francisco J.~R.
  Ruiz, Julian Schrittwieser, Grzegorz Swirszcz, David Silver, Demis Hassabis,
  and Pushmeet Kohli.
\newblock Discovering faster matrix multiplication algorithms with
  reinforcement learning.
\newblock \emph{Nature}, 610\penalty0 (7930):\penalty0 47--53, 2022.
\newblock \doi{10.1038/s41586-022-05172-4}.
\newblock URL \url{https://doi.org/10.1038/s41586-022-05172-4}.

\bibitem[Florensa et~al.(2017)Florensa, Held, Wulfmeier, Zhang, and
  Abbeel]{Florensa2017ReverseCurriculum}
Carlos Florensa, David Held, Markus Wulfmeier, Michael Zhang, and Pieter
  Abbeel.
\newblock Reverse curriculum generation for reinforcement learning.
\newblock In \emph{Conference on Robot Learning}, pages 482--495, 2017.
\newblock URL
  \url{http://proceedings.mlr.press/v78/florensa17a/florensa17a.pdf}.

\bibitem[Gidney(2020)]{Gidney2020InverseTableaus}
Craig Gidney.
\newblock Inverting clifford tableaus, 2020.
\newblock URL \url{https://algassert.com/post/2002}.
\newblock Blog post, accessed 2026-04-07.

\bibitem[Gidney et~al.(2024)Gidney, Shutty, and Jones]{Gidney2024Cultivation}
Craig Gidney, Noah Shutty, and Cody Jones.
\newblock Magic state cultivation: growing t states as cheap as {CNOT} gates,
  2024.
\newblock URL \url{https://arxiv.org/abs/2409.17595v1}.

\bibitem[Gottesman(1998)]{Gottesman1998Heisenberg}
Daniel Gottesman.
\newblock The heisenberg representation of quantum computers, 1998.
\newblock URL \url{https://arxiv.org/abs/9807006v1}.

\bibitem[Heyfron and Campbell(2018)]{Heyfron2017TODD}
Luke~E Heyfron and Earl~T Campbell.
\newblock An efficient quantum compiler that reduces {T} count.
\newblock \emph{Quantum Science and Technology}, 4\penalty0 (1):\penalty0
  015004, 2018.
\newblock \doi{10.1088/2058-9565/aad604}.
\newblock URL \url{https://doi.org/10.1088/2058-9565/aad604}.

\bibitem[{IBM Quantum and Qiskit
  contributors}(2026{\natexlab{a}})]{QiskitCliffordGreedyDocs}
{IBM Quantum and Qiskit contributors}.
\newblock {GreedySynthesisClifford}, 2026{\natexlab{a}}.
\newblock URL
  \url{https://quantum.cloud.ibm.com/docs/en/api/qiskit/2.2/qiskit.transpiler.passes.synthesis.hls_plugins.GreedySynthesisClifford}.
\newblock Qiskit 2.2 API documentation, accessed 2026-05-07.

\bibitem[{IBM Quantum and Qiskit
  contributors}(2026{\natexlab{b}})]{QiskitCliffordPluginDocs}
{IBM Quantum and Qiskit contributors}.
\newblock {DefaultSynthesisClifford}, 2026{\natexlab{b}}.
\newblock URL
  \url{https://quantum.cloud.ibm.com/docs/en/api/qiskit/2.2/qiskit.transpiler.passes.synthesis.hls_plugins.DefaultSynthesisClifford}.
\newblock Qiskit 2.2 API documentation, accessed 2026-05-07.

\bibitem[{IBM Quantum and Qiskit
  contributors}(2026{\natexlab{c}})]{QiskitRandomCliffordDocs}
{IBM Quantum and Qiskit contributors}.
\newblock {random\_clifford}, 2026{\natexlab{c}}.
\newblock URL
  \url{https://quantum.cloud.ibm.com/docs/en/api/qiskit/2.2/quantum_info}.
\newblock Qiskit 2.2 API documentation, accessed 2026-05-07.

\bibitem[Kissinger and van~de Wetering(2020)]{Kissinger2020ReducingNonClifford}
Aleks Kissinger and John van~de Wetering.
\newblock Reducing the number of non-{Clifford} gates in quantum circuits.
\newblock \emph{Physical Review A}, 102\penalty0 (2), 2020.
\newblock \doi{10.1103/PhysRevA.102.022406}.
\newblock URL \url{https://doi.org/10.1103/physreva.102.022406}.

\bibitem[Kremer et~al.(2024)Kremer, Villar, Paik, Duran, Faro, and
  Cruz-Benito]{kremer2024practical}
David Kremer, Victor Villar, Hanhee Paik, Ivan Duran, Ismael Faro, and Juan
  Cruz-Benito.
\newblock Practical and efficient quantum circuit synthesis and transpiling
  with reinforcement learning, 2024.
\newblock URL \url{https://arxiv.org/abs/2405.13196v2}.

\bibitem[Litinski(2019)]{Litinski2019MagicStateDistillation}
Daniel Litinski.
\newblock Magic state distillation: Not as costly as you think.
\newblock \emph{Quantum}, 3:\penalty0 205, 2019.
\newblock \doi{10.22331/q-2019-12-02-205}.
\newblock URL \url{https://doi.org/10.22331/q-2019-12-02-205}.

\bibitem[Litinski and von Oppen(2018)]{LitinskiOppen2018}
Daniel Litinski and Felix von Oppen.
\newblock Lattice surgery with a twist: Simplifying {Clifford} gates of surface
  codes.
\newblock \emph{Quantum}, 2:\penalty0 62, 2018.
\newblock \doi{10.22331/q-2018-05-04-62}.
\newblock URL \url{https://doi.org/10.22331/q-2018-05-04-62}.

\bibitem[Mattick et~al.(2025)Mattick, Periyasamy, Ufrecht, Dubey, Mutschler,
  Plinge, and Scherer]{Mattick2025ZXRL}
Alexander Mattick, Maniraman Periyasamy, Christian Ufrecht, Abhishek~Y. Dubey,
  Christopher Mutschler, Axel Plinge, and Daniel~D. Scherer.
\newblock Optimizing quantum circuits via {ZX} diagrams using reinforcement
  learning and graph neural networks, 2025.
\newblock URL \url{https://arxiv.org/abs/2504.03429v1}.

\bibitem[Narvekar et~al.(2020)Narvekar, Peng, Leonetti, Sinapov, Taylor, and
  Stone]{Narvekar2020CurriculumSurvey}
Sanmit Narvekar, Bei Peng, Matteo Leonetti, Jivko Sinapov, Matthew~E. Taylor,
  and Peter Stone.
\newblock Curriculum learning for reinforcement learning domains: A framework
  and survey.
\newblock \emph{Journal of Machine Learning Research}, 21\penalty0
  (181):\penalty0 1--50, 2020.
\newblock URL \url{http://jmlr.org/papers/volume21/20-212/20-212.pdf}.

\bibitem[Nägele and Marquardt(2024)]{Nagele2024ZXRL}
Maximilian Nägele and Florian Marquardt.
\newblock Optimizing {ZX}-diagrams with deep reinforcement learning.
\newblock \emph{Machine Learning: Science and Technology}, 5\penalty0
  (3):\penalty0 035077, 2024.
\newblock \doi{10.1088/2632-2153/ad76f7}.
\newblock URL \url{https://doi.org/10.1088/2632-2153/ad76f7}.

\bibitem[Peham et~al.(2023)Peham, Brandl, Kueng, Wille, and
  Burgholzer]{Peham2023}
Tom Peham, Nina Brandl, Richard Kueng, Robert Wille, and Lukas Burgholzer.
\newblock Depth-optimal synthesis of {Clifford} circuits with {SAT} solvers.
\newblock In \emph{2023 IEEE International Conference on Quantum Computing and
  Engineering (QCE)}, pages 802--813, 2023.
\newblock \doi{10.1109/QCE57702.2023.00095}.
\newblock URL \url{https://doi.org/10.1109/qce57702.2023.00095}.

\bibitem[Riu et~al.(2025)Riu, Nogué, Vilaplana, Garcia-Saez, and
  Estarellas]{Riu2025ZXRL}
Jordi Riu, Jan Nogué, Gerard Vilaplana, Artur Garcia-Saez, and Marta~P.
  Estarellas.
\newblock Reinforcement learning based quantum circuit optimization via
  {ZX}-calculus.
\newblock \emph{Quantum}, 9:\penalty0 1758, 2025.
\newblock \doi{10.22331/Q-2025-05-28-1758}.
\newblock URL \url{https://doi.org/10.22331/q-2025-05-28-1758}.

\bibitem[Ruiz et~al.(2025)Ruiz, Laakkonen, Bausch, Balog, Barekatain, Heras,
  Novikov, Fitzpatrick, Romera-Paredes, van~de Wetering, Fawzi, Meichanetzidis,
  and Kohli]{Ruiz2025AlphaTensorQuantum}
Francisco J.~R. Ruiz, Tuomas Laakkonen, Johannes Bausch, Matej Balog,
  Mohammadamin Barekatain, Francisco J.~H. Heras, Alexander Novikov, Nathan
  Fitzpatrick, Bernardino Romera-Paredes, John van~de Wetering, Alhussein
  Fawzi, Konstantinos Meichanetzidis, and Pushmeet Kohli.
\newblock Quantum circuit optimization with {AlphaTensor}.
\newblock \emph{Nature Machine Intelligence}, 7\penalty0 (3):\penalty0
  374--385, 2025.
\newblock \doi{10.1038/s42256-025-01001-1}.
\newblock URL \url{https://doi.org/10.1038/s42256-025-01001-1}.

\bibitem[Schulman et~al.(2017)Schulman, Wolski, Dhariwal, Radford, and
  Klimov]{Schulman2017PPO}
John Schulman, Filip Wolski, Prafulla Dhariwal, Alec Radford, and Oleg Klimov.
\newblock Proximal policy optimization algorithms, 2017.
\newblock URL \url{https://arxiv.org/abs/1707.06347v2}.

\bibitem[Selinger(2015)]{Selinger2015}
Peter Selinger.
\newblock Efficient {Clifford+T} approximation of single-qubit operators.
\newblock \emph{Quantum Information and Computation}, 15\penalty0
  (1\&2):\penalty0 159--180, 2015.
\newblock \doi{10.26421/qic15.1-2-10}.
\newblock URL \url{https://doi.org/10.26421/qic15.1-2-10}.

\bibitem[Shaik and van~de Pol(2025)]{Shaik2025}
Irfansha Shaik and Jaco van~de Pol.
\newblock {CNOT}-optimal {Clifford} synthesis as {SAT}, 2025.
\newblock URL
  \url{https://drops.dagstuhl.de/entities/document/10.4230/LIPIcs.SAT.2025.28}.

\bibitem[Shaw et~al.(2018)Shaw, Uszkoreit, and
  Vaswani]{Shaw2018RelativeAttention}
Peter Shaw, Jakob Uszkoreit, and Ashish Vaswani.
\newblock Self-attention with relative position representations.
\newblock In \emph{Proceedings of the 2018 Conference of the North American
  Chapter of the Association for Computational Linguistics: Human Language
  Technologies, Volume 2 (Short Papers)}, pages 464--468, New Orleans,
  Louisiana, 2018. Association for Computational Linguistics.
\newblock \doi{10.18653/v1/N18-2074}.
\newblock URL \url{https://aclanthology.org/N18-2074/}.

\bibitem[Simmons(2021)]{Simmons2021PauliFlow}
Will Simmons.
\newblock Relating measurement patterns to circuits via {Pauli} flow.
\newblock \emph{Electronic Proceedings in Theoretical Computer Science},
  343:\penalty0 50--101, 2021.
\newblock \doi{10.4204/EPTCS.343.4}.
\newblock URL \url{https://doi.org/10.4204/eptcs.343.4}.

\bibitem[Suarez(2024)]{Suarez2024PufferLib}
Joseph Suarez.
\newblock {PufferLib}: Making reinforcement learning libraries and environments
  play nice, 2024.
\newblock URL \url{https://arxiv.org/abs/2406.12905v1}.

\bibitem[van~de Griend(2019)]{meijermasters}
Arianne van~de Griend.
\newblock Constrained quantum {CNOT} circuit re-synthesis using deep
  reinforcement learning.
\newblock Master's thesis, Radboud University, 2019.
\newblock URL \url{https://theses.ubn.ru.nl/handle/123456789/10713}.

\bibitem[van~de Wetering et~al.(2025)van~de Wetering, Yeung, Laakkonen, and
  Kissinger]{Wetering2025ParametrisedCompilation}
John van~de Wetering, Richie Yeung, Tuomas Laakkonen, and Aleks Kissinger.
\newblock Optimal compilation of parametrised quantum circuits.
\newblock \emph{Quantum}, 9:\penalty0 1828, 2025.
\newblock \doi{10.22331/q-2025-08-27-1828}.
\newblock URL \url{https://doi.org/10.22331/q-2025-08-27-1828}.

\bibitem[van~der Pol et~al.(2020)van~der Pol, Worrall, van Hoof, Oliehoek, and
  Welling]{VanDerPol2020MDPHomomorphicNetworks}
Elise van~der Pol, Daniel~E. Worrall, Herke van Hoof, Frans~A. Oliehoek, and
  Max Welling.
\newblock {MDP} homomorphic networks: Group symmetries in reinforcement
  learning.
\newblock In \emph{Advances in Neural Information Processing Systems},
  volume~33, pages 4199--4210, 2020.
\newblock URL
  \url{https://proceedings.neurips.cc/paper/2020/file/2be5f9c2e3620eb73c2972d7552b6cb5-Paper.pdf}.

\bibitem[Vaswani et~al.(2017)Vaswani, Shazeer, Parmar, Uszkoreit, Jones, Gomez,
  Kaiser, and Polosukhin]{Vaswani2017Attention}
Ashish Vaswani, Noam Shazeer, Niki Parmar, Jakob Uszkoreit, Llion Jones,
  Aidan~N. Gomez, Lukasz Kaiser, and Illia Polosukhin.
\newblock Attention is all you need.
\newblock In \emph{Advances in Neural Information Processing Systems},
  volume~30, 2017.
\newblock URL
  \url{https://proceedings.neurips.cc/paper/2017/hash/3f5ee243547dee91fbd053c1c4a845aa-Abstract.html}.

\bibitem[Webster et~al.(2025)Webster, Koutsioumpas, and
  Browne]{Webster2025HeuristicOptimal}
Mark Webster, Stergios Koutsioumpas, and Dan~E Browne.
\newblock Heuristic and optimal synthesis of {CNOT} and {Clifford} circuits,
  2025.
\newblock URL \url{https://arxiv.org/abs/2503.14660v1}.

\bibitem[Zen et~al.(2025)Zen, Olle, Colmenarez, Puviani, Müller, and
  Marquardt]{Zen2023}
Remmy Zen, Jan Olle, Luis Colmenarez, Matteo Puviani, Markus Müller, and
  Florian Marquardt.
\newblock Quantum circuit discovery for fault-tolerant logical state
  preparation with reinforcement learning.
\newblock \emph{Physical Review X}, 15\penalty0 (4):\penalty0 041012, 2025.
\newblock \doi{10.1103/gqpr-dgz7}.
\newblock URL \url{https://doi.org/10.1103/gqpr-dgz7}.

\bibitem[Zinkevich and Balch(2001)]{ZinkevichBalch2001SymmetryMDP}
Martin Zinkevich and Tucker~R. Balch.
\newblock Symmetry in {Markov} decision processes and its implications for
  single agent and multiagent learning.
\newblock In \emph{Proceedings of the Eighteenth International Conference on
  Machine Learning}, pages 632--640. Morgan Kaufmann, 2001.
\newblock URL \url{https://www.cs.cmu.edu/~maz/publications/symmetry7.pdf}.

\end{thebibliography}
\bibliographystyle{plainnat}

\clearpage
\appendix
\setcounter{table}{0}
\renewcommand{\thetable}{A\arabic{table}}
\renewcommand{\theHtable}{appendix.table.\arabic{table}}
\setcounter{figure}{0}
\renewcommand{\thefigure}{A\arabic{figure}}
\renewcommand{\theHfigure}{appendix.figure.\arabic{figure}}

\section{Equivariant Optimal Policies}
\label{app:equivariant-optimal-policy}

This appendix records the symmetry justification for using an equivariant policy class in the Clifford synthesis MDP, following the standard symmetry theory of finite MDPs \cite{VanDerPol2020MDPHomomorphicNetworks, ZinkevichBalch2001SymmetryMDP}. The statement is not specific to Clifford circuits; it applies to any stationary discounted MDP whose dynamics and rewards commute with a group action.

\paragraph{Proposition.}
Let $(\mathcal{S},\mathcal{A},\mathcal{T},\mathcal{R})$ be a finite deterministic MDP with stationary transition and reward functions and discount factor $\gamma\in[0,1)$. Let a group $\Gamma$ act on both $\mathcal{S}$ and $\mathcal{A}$. Suppose that, for every $\sigma\in\Gamma$, $M\in\mathcal{S}$, and $G\in\mathcal{A}$,
\[
    \mathcal{T}(\sigma\!\cdot\! M,\sigma\!\cdot\! G)
    =
    \sigma\!\cdot\!\mathcal{T}(M,G),
    \qquad
    \mathcal{R}(\sigma\!\cdot\! M,\sigma\!\cdot\! G)
    =
    \mathcal{R}(M,G).
\]
Then there exists an optimal stationary stochastic Markov policy $\pi^\star$ that is equivariant:
\[
    \pi^\star(\sigma\!\cdot\! G \mid \sigma\!\cdot\! M)
    =
    \pi^\star(G\mid M)
\]
for all states $M$, actions $G$, and group elements $\sigma$.

\paragraph{Proof.}
Let $V^\star$ be the unique solution to the discounted Bellman optimality equation
\[
    V^\star(M)
    =
    \max_{G\in\mathcal{A}}
    \left[
        \mathcal{R}(M,G)
        +
        \gamma V^\star(\mathcal{T}(M,G))
    \right].
\]
For any fixed $\sigma\in\Gamma$, define $W_\sigma(M)=V^\star(\sigma\!\cdot\! M)$. Then
\[
\begin{aligned}
W_\sigma(M)
&=
V^\star(\sigma\!\cdot\! M)\\
&=
\max_{G'\in\mathcal{A}}
\left[
    \mathcal{R}(\sigma\!\cdot\! M,G')
    +
    \gamma V^\star(\mathcal{T}(\sigma\!\cdot\! M,G'))
\right]\\
&=
\max_{G\in\mathcal{A}}
\left[
    \mathcal{R}(\sigma\!\cdot\! M,\sigma\!\cdot\! G)
    +
    \gamma V^\star(\mathcal{T}(\sigma\!\cdot\! M,\sigma\!\cdot\! G))
\right]\\
&=
\max_{G\in\mathcal{A}}
\left[
    \mathcal{R}(M,G)
    +
    \gamma V^\star(\sigma\!\cdot\!\mathcal{T}(M,G))
\right]\\
&=
\max_{G\in\mathcal{A}}
\left[
    \mathcal{R}(M,G)
    +
    \gamma W_\sigma(\mathcal{T}(M,G))
\right].
\end{aligned}
\]
In the second equality we reindexed the maximization using the bijection $G\mapsto\sigma\!\cdot\! G$. Thus $W_\sigma$ also satisfies the Bellman optimality equation. By uniqueness, $W_\sigma=V^\star$, so
\[
    V^\star(\sigma\!\cdot\! M)=V^\star(M).
\]
Define the optimal action-value function
\[
    Q^\star(M,G)
    =
    \mathcal{R}(M,G)
    +
    \gamma V^\star(\mathcal{T}(M,G)).
\]
The invariance of $V^\star$, together with the equivariance of $\mathcal{T}$ and invariance of $\mathcal{R}$, gives
\[
    Q^\star(\sigma\!\cdot\! M,\sigma\!\cdot\! G)=Q^\star(M,G).
\]
Therefore the optimal action set
\[
    \mathcal{A}^\star(M)
    =
    \arg\max_{G\in\mathcal{A}} Q^\star(M,G)
\]
satisfies
\[
    \mathcal{A}^\star(\sigma\!\cdot\! M)
    =
    \sigma\!\cdot\!\mathcal{A}^\star(M).
\]
Let $\pi^\star(\cdot\mid M)$ be the uniform distribution over $\mathcal{A}^\star(M)$. This policy is optimal because it assigns probability only to optimal actions. It is also equivariant, since the optimal action set at $\sigma\!\cdot\! M$ is exactly the relabeling of the optimal action set at $M$. Hence
\[
    \pi^\star(\sigma\!\cdot\! G\mid \sigma\!\cdot\! M)
    =
    \pi^\star(G\mid M).
\]
\hfill$\square$

\paragraph{Application to Clifford synthesis.}
In our Clifford synthesis MDP, the symmetry group is the qubit permutation group $S_n$. A permutation $\sigma\in S_n$ acts on tableaus by simultaneously relabeling the row and column qubit blocks, and acts on generators by
\[
H_i\mapsto H_{\sigma(i)},\qquad
S_i\mapsto S_{\sigma(i)},\qquad
\mathrm{CZ}_{i,j}\mapsto \mathrm{CZ}_{\sigma(i),\sigma(j)}.
\]
Right multiplication by generators commutes with this relabeling, so
\[
    \mathcal{T}(\sigma\!\cdot\! M,\sigma\!\cdot\! G)
    =
    \sigma\!\cdot\!\mathcal{T}(M,G).
\]
The reward is also invariant: the single-qubit and two-qubit costs depend only on the gate type, the identity tableau is fixed by every qubit relabeling, and the Hamming distance to the identity is unchanged by simultaneously permuting rows and columns. Therefore the proposition applies, and there exists an optimal stationary stochastic Markov policy for the discounted Clifford synthesis objective that is equivariant under qubit relabeling. If a hard rollout cap is included as part of the objective, one can augment the state with the remaining step budget and let $S_n$ act trivially on that extra coordinate; the same argument then gives an optimal equivariant policy on the augmented state space.

\section{Random-Walk Limit Distribution}
\label{app:random-walk-limit}

This appendix justifies the use of uniformly sampled Clifford targets as the $\infty$ endpoint of the random-walk curriculum.

\begin{theorem}
\label{thm:random-walk-limit}
Fix $n\geq 3$, write $\Gamma=\Sp(2n,\mathbb{F}_2)$, and let $K$ be the transition matrix for the random walk on $\Gamma$ that right-multiplies by a uniformly sampled generator from $\mathcal{G}$ in \eqref{eq:clifford-generators}.
Then the $d$-step distribution from the identity converges to the uniform distribution on $\Gamma$:
\[
    K^d(I_{2n},M)\longrightarrow \frac{1}{|\Gamma|}
    \qquad
    \text{for every }M\in\Gamma.
\]
For $n=1$ and $n=2$, the corresponding walk has period $2$ and therefore does not converge to the uniform distribution at fixed walk length.
\end{theorem}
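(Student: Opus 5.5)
The plan is to treat $K$ as a random walk on the finite group $\Gamma$ driven by the uniform measure $\mu$ on $\mathcal{G}$, and invoke the standard convergence theorem for finite Markov chains: an irreducible, aperiodic chain converges to its unique stationary distribution from every starting state. Since each generator is a bijection $\Gamma\to\Gamma$ under right multiplication, $K$ is doubly stochastic, so the uniform distribution is stationary. Moreover $\mathcal{G}$ is closed under inverses (every generator is an involution), so $K$ is symmetric. It therefore remains to verify irreducibility and aperiodicity, and separately to exhibit period $2$ when $n\le 2$.

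Irreducibility is immediate: as recalled in Section~\ref{sec:background}, $\mathcal{G}$ generates $\Sp(2n,\mathbb{F}_2)$, so every $M$ is reachable from $I_{2n}$. For a symmetric irreducible walk, the period equals $\gcd$ of return-time lengths. Every generator is an involution, so $G\cdot G=I$ gives returns of length $2$. Hence aperiodicity reduces to finding a single closed word of \emph{odd} length in the generators. Equivalently, one has to rule out a homomorphism $\chi:\Gamma\to\{\pm1\}$ with $\chi(G)=-1$ for every $G\in\mathcal{G}$: the walk has period $2$ exactly when such a sign character exists.

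For $n\ge 3$ I will write down an explicit odd relation. The natural candidate is the braid-type relation for CZ and S on qubits: $(S_i\,\mathrm{CZ}_{i,j})$ has some order, or CZ gates on three qubits interacting with Hadamards. A concrete choice to try first is that conjugation by $H_j$ turns $\mathrm{CZ}_{i,j}$ into a CNOT, and three CNOTs give a SWAP. This yields relations such as $\mathrm{SWAP}_{ij}\mathrm{CZ}_{jk}\mathrm{SWAP}_{ij}=\mathrm{CZ}_{ik}$. Here $\mathrm{SWAP}_{ij}$ is a word of length $3\cdot 1+$ Hadamards (CNOT$_{ij}$ CNOT$_{ji}$ CNOT$_{ij}$ as $H_j\mathrm{CZ}H_j\,H_i\mathrm{CZ}H_i\,H_j\mathrm{CZ}H_j$), which has odd length $9$ up to cancellation $H_jH_j$. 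Conjugating $\mathrm{CZ}_{jk}$ by it gives a word equal to $\mathrm{CZ}_{ik}$, producing a closed word of length $2\cdot 9+1+1=20$, which is even. Its CZ count, $3+1+3+1=8$, is also even. So instead the approach is to use the abstract criterion: any sign character $\chi$ with $\chi=-1$ on all generators must be constant on conjugacy classes. Under the SWAP relation above, $\mathrm{CZ}_{ik}$ and $\mathrm{CZ}_{jk}$ are conjugate, which is consistent. The cleaner route is that $\Sp(2n,\mathbb{F}_2)$ is perfect for $n\ge 3$ (it is simple, by the classical result). A perfect group has no nontrivial homomorphism to $\{\pm1\}$, so $\chi$ would be trivial, contradicting $\chi(G)=-1$. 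Hence some odd closed word exists and the walk is aperiodic, completing the convergence claim. The main obstacle is this step; I will cite perfectness/simplicity of $\Sp(2n,\mathbb{F}_2)$ for $n\ge3$ rather than hunting for an explicit odd relation.

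For $n=1,2$ the plan is to exhibit the sign character. We have $\Sp(2,\mathbb{F}_2)\cong S_3$, and $H$ and $S$ act as transpositions on the three nonzero vectors of $\mathbb{F}_2^2$, so the sign of $S_3$ sends every generator to $-1$. Similarly $\Sp(4,\mathbb{F}_2)\cong S_6$, and I will check that $H_i$, $S_i$ and $\mathrm{CZ}_{12}$ all map to odd permutations. Transvections in $\Sp(4,2)$ correspond to transpositions in $S_6$, and $S_i$ and $\mathrm{CZ}_{12}$ are transvections. $H_i$ is a product of three transvections, giving a transposition type that is again odd. With $\chi(G)=-1$ for all generators, the $d$-step distribution is supported on $\{\chi=(-1)^d\}$, so it cannot converge to uniform.
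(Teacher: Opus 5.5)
Your proposal is correct, but it reaches aperiodicity by a different route from the paper. The paper gives an explicit odd relation, the nine-letter word $H_1\,\mathrm{CZ}_{1,2}\,H_1\,\mathrm{CZ}_{1,3}\,H_1\,\mathrm{CZ}_{2,3}\,\mathrm{CZ}_{1,2}\,H_1\,\mathrm{CZ}_{1,3}=I_{2n}$. This can be checked by tracking a row vector through the nine column updates, and it embeds unchanged into any $n\ge 3$, so the period divides $\gcd(2,9)=1$. It is exactly the kind of witness your SWAP attempt was looking for.

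You instead use the criterion that the walk has period $2$ if and only if some character $\chi:\Gamma\to\{\pm1\}$ equals $-1$ on all of $\mathcal{G}$. Your word-reversal argument using involutive generators makes this equivalence valid. You then rule such a character out for $n\ge 3$ by perfectness of $\Sp(2n,\mathbb{F}_2)$.

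This trades a finite, self-contained computation for a classical but nontrivial citation (simplicity of $\Sp(2n,2)$ for $n\ge3$), and it yields no explicit odd word. In return, it explains the dichotomy structurally: $n=1,2$ are precisely the non-perfect cases $\Sp(2,2)\cong S_3$ and $\Sp(4,2)\cong S_6$. It also gives an actual argument for the $n=2$ period-$2$ claim, which the paper only asserts (the $360/360$ split).

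There is one factual slip in your $n=2$ step: $\mathrm{CZ}_{1,2}$ is not a transvection, since $M_{\mathrm{CZ}_{1,2}}-I$ has rank $2$. Conversely, $H_i$ is itself a transvection, since $M_H-I$ has rank $1$. The conclusion still holds. Write $\mathrm{CZ}_{1,2}=S_1S_2T$, where $T$ is the transvection $z_1\mapsto z_1+x_1+x_2,\ z_2\mapsto z_2+x_1+x_2$ along $e_{z_1}+e_{z_2}$. All three factors add linear functions of the $x$'s to the $z$'s, so they commute and compose additively. Hence $\mathrm{CZ}_{1,2}$ is a product of three transvections and is odd.

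You can also avoid the explicit $S_6$ dictionary altogether. Transvections generate $\Sp(4,2)$ and form a single conjugacy class. So the unique nontrivial sign character of $\Sp(4,2)\cong S_6$ must be $-1$ on every transvection, and therefore on $H_i$, $S_i$, and $\mathrm{CZ}_{1,2}$.
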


\begin{proof}
We use the standard ergodic theorem for finite Markov chains: an irreducible and aperiodic finite chain converges to its unique stationary distribution.
Because $\mathcal{G}$ generates $\Gamma$, the associated Cayley graph is connected, so the chain is irreducible.
Figure~\ref{fig:appendix-cayley-graph} shows a local ball in the corresponding generator graph for the two-qubit case, where random-walk steps move along generator-labeled edges from the identity.
The uniform distribution is stationary: if $M$ is uniform on $\Gamma$ and $G\in\mathcal{G}$ is fixed, then $MG$ is still uniform, because right multiplication by $G$ only permutes the elements of $\Gamma$.

For convergence to this stationary distribution we also need aperiodicity.
Each generator is an involution in the binary symplectic representation, so $G^2=I_{2n}$ gives a closed walk of length $2$.
For $n\geq 3$, there is also an odd closed walk:
\[
H_1\,\mathrm{CZ}_{1,2}\,H_1\,\mathrm{CZ}_{1,3}\,H_1\,
\mathrm{CZ}_{2,3}\,\mathrm{CZ}_{1,2}\,H_1\,\mathrm{CZ}_{1,3}
=I_{2n}.
\]
Figure~\ref{fig:odd-walk-identity} draws this nine-step identity circuit.
Equivalently, the identity is verified by multiplying the generator matrices defined in Appendix~\ref{app:clifford-tableaus}; it acts nontrivially only on qubits $1,2,3$, so the same relation embeds in every larger $n$.
Thus the period of the identity divides both $2$ and $9$, and is therefore $1$.
Irreducibility implies every state has the same period, so the whole chain is finite, irreducible, and aperiodic.
The standard convergence theorem for finite Markov chains then gives
\[
    K^d(I_{2n},M)\longrightarrow \frac{1}{|\Gamma|}
    \qquad
    \text{for every }M\in\Gamma,
\]
which is exactly uniform sampling from $\Sp(2n,\mathbb{F}_2)$.

For $n=1$ and $n=2$, the walk generated by $\{H_i,S_i,\mathrm{CZ}_{i,j}\}$ has period $2$, so the fixed-length distributions oscillate between the two parity classes of the Cayley graph rather than converging.
Concretely, for $n=1$ the generators are only $H_1$ and $S_1$, and the two parity classes are
\[
    \{I,H_1S_1,S_1H_1\}
    \qquad\text{and}\qquad
    \{H_1,S_1,H_1S_1H_1\},
\]
the elements reachable at even and odd times, respectively.
For $n=2$: $Sp(4,\mathbb{F}_2)$ has $720$ elements, split into two parity classes of size $360$ under this generator walk.
\end{proof}

\begin{figure}[t]
\centering
\includegraphics[width=0.88\linewidth]{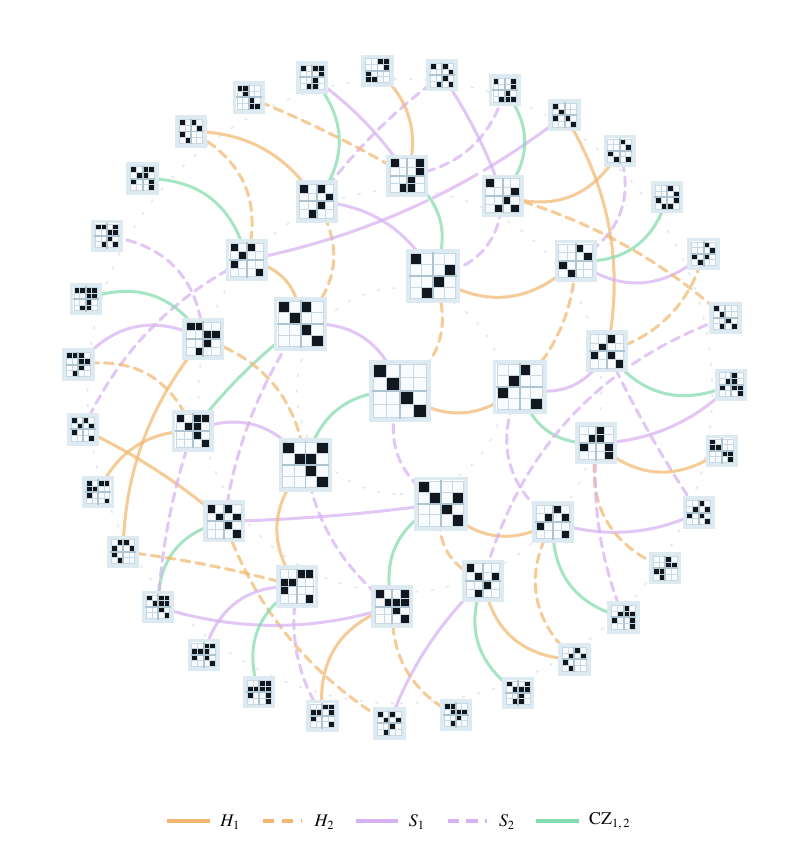}
\caption{Local view of the Cayley graph induced by the two-qubit Clifford generators. Vertices are tableaus reached from the identity by walks of length at most three; the legend identifies the generator for each edge.}
\label{fig:appendix-cayley-graph}
\end{figure}

\begin{figure}[t]
\centering
\includegraphics[width=0.78\linewidth]{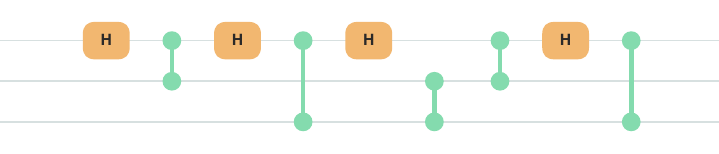}
\caption{Nine-gate identity circuit used in Theorem~\ref{thm:random-walk-limit}.}
\label{fig:odd-walk-identity}
\end{figure}

\section{Architecture Implementation Details}
\label{app:architecture-details}

This appendix gives the implementation details behind the embedding, aggregation, message-passing, and readout stages described in Section~\ref{sec:method}. All additional features below are permutation-equivariant count-based features or permutation-invariant global summaries, so including them does not change the symmetry argument in the main text.

\paragraph{Embedding.}
For each $2\times2$ block $M_{ij}$ of the reshaped tableau, the embedding lookup uses the four binary entries of $M_{ij}$ together with a diagonal indicator $\ind(i=j)$, giving $32$ possible block states.
This produces the edge embedding $e_{ij}\in\mathbb{R}^h$ used in the main text.

\paragraph{Aggregation and count features.}
The initial qubit token $q_i^{(0)}$ is built from row-pooled and column-pooled edge embeddings, the diagonal edge embedding $e_{ii}$, and a count-feature vector $\eta_i$.
Specifically, with a shared MLP $\phi_{\mathrm{node}}$, we set
\begin{equation} \label{eq:initial-node-embedding}
    q_i^{(0)} \coloneqq \phi_{\mathrm{node}}(\underbrace{\left[\operatorname{mean}_j e_{ij}, \operatorname{max}_j e_{ij}\right]}_{\text{Row features}}, \underbrace{\left[\operatorname{mean}_j e_{ji}, \operatorname{max}_j e_{ji}\right]}_{\text{Column features}}, \underbrace{e_{ii}}_{\text{Diagonal embedding}}, \eta_i).
\end{equation}
Here $\eta_i \coloneqq [\eta_{i,1}, \eta_{i,2}] \in \mathbb{R}^{11}$.
The choice of these features is motivated by Webster \emph{et al.} ~\cite{Webster2025HeuristicOptimal}, who show high correlation between related rank-count features and optimal two-qubit gate count for stabilizer state preparation.
The first component $\eta_{i,1}\in\mathbb{R}^9$ is
\[
\eta_{i,1}=
\bigl[
d_i,\;
\rho^{\mathrm{row}}_i,\rho^{\mathrm{col}}_i,\;
\rho^{\mathrm{row},1}_i,\rho^{\mathrm{row},2}_i,\;
\rho^{\mathrm{col},1}_i,\rho^{\mathrm{col},2}_i,\;
\rho^{\mathrm{offrow}}_i,\rho^{\mathrm{offcol}}_i
\bigr],
\]
where $d_i$ indicates whether the diagonal block $M_{ii}$ is the $2\times2$ identity; $\rho^{\mathrm{row}}_i$ and $\rho^{\mathrm{col}}_i$ are the fractions of nonzero blocks in block row $i$ and block column $i$; $\rho^{\mathrm{row},1}_i,\rho^{\mathrm{row},2}_i$ and $\rho^{\mathrm{col},1}_i,\rho^{\mathrm{col},2}_i$ are the corresponding fractions of rank-one and rank-two blocks; and $\rho^{\mathrm{offrow}}_i,\rho^{\mathrm{offcol}}_i$ are the fractions of nonzero off-diagonal blocks in row $i$ and column $i$, normalized by $n-1$.
The vector $\eta_{i,2}\in\mathbb{R}^2$ records the rank of the diagonal block $M_{ii}$ as two indicators for rank one and rank two.

\paragraph{Message passing.}
The message function $\phi_{\mathrm{msg}}$ and update function $\phi_{\mathrm{upd}}$ are shared across all qubit indices and all message-passing rounds.
As in the main text, the update for qubit $i$ uses the current token $q_i^{(k)}$, mean-pooled incoming messages, max-pooled incoming messages, and the rank-count component $\eta_{i,1}$.

\paragraph{Readout.}
Let $q_i=q_i^{(L)}$ be the final qubit token after $L$ message-passing rounds.
The implementation first forms invariant statistics
\begin{align*}
    s &\coloneqq
    \left[
    \operatorname{mean}_i q_i,\;
    \operatorname{max}_i q_i,\;
    \operatorname{std}_i q_i,\;
    \bar c
    \right],\\
    \bar c &\coloneqq
    \left[
    \operatorname{mean}_i d_i,\;
    \operatorname{mean}_i \rho^{\mathrm{row}}_i,\;
    \operatorname{mean}_i \rho^{\mathrm{col}}_i,\;
    \operatorname{mean}_i \rho^{\mathrm{offrow}}_i,\;
    \operatorname{mean}_i \rho^{\mathrm{offcol}}_i,\;
    \operatorname{mean}_i \eta_{i,2}
    \right],
\end{align*}
where $\operatorname{std}_i$ is the elementwise standard deviation over qubit tokens.
Thus the invariant readout statistics have dimension $3h+7$.
The implementation maps these statistics through a global MLP to obtain the learned summary $g$.

For the one-qubit action heads, the readout can be written as the separate MLPs $\phi_{\mathrm{H}}$ and $\phi_{\mathrm{S}}$ used in the main text.
Concretely, the implementation realizes these heads with a local MLP conditioned on an action-type one-hot vector:
\[
    (q_i,\;e_{ii},\;g,\;\eta_i,\;\tau_a),
\]
where $\tau_a\in\mathbb{R}^2$ is the action-type one-hot vector.
The two-qubit $\mathrm{CZ}$ head receives the symmetric pair context
\[
    (q_i+q_j,\; q_i\odot q_j,\; |q_i-q_j|,\; e_{ij}+e_{ji},\; e_{ij}\odot e_{ji},\; g).
\]
Here $\odot$ denotes elementwise multiplication.
The value head is the MLP $\phi_V(s)$ applied directly to the invariant statistics $s$.

\section{Clifford Circuits and Tableaus}
\label{app:clifford-tableaus}

\begin{figure*}[t]
\centering
\begin{minipage}[c]{0.85\textwidth}
\centering
\includegraphics[width=\textwidth]{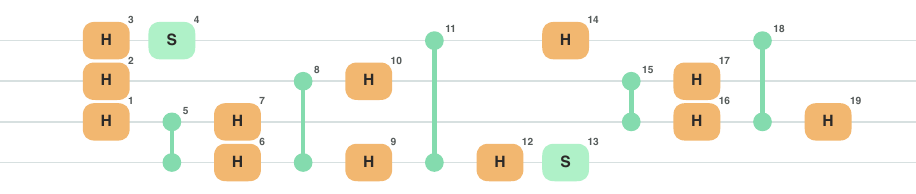}
\end{minipage}

\vspace{2em}

\begin{minipage}[c]{0.85\textwidth}
\centering
\newcommand{\rolloutstate}[1]{\raisebox{-0.5\height}{\includegraphics[width=0.16\textwidth]{#1}}}
\newcommand{\rolloutarrow}[1]{\makebox[0.052\textwidth][c]{$\overset{#1}{\Rightarrow}$}}
\newcommand{\rolloutblankarrow}{\makebox[0.052\textwidth][c]{}}
\newcommand{\rolloutphantomarrow}{\makebox[0.052\textwidth][c]{\phantom{$\overset{\mathrm{CZ}_{1,3}}{\Rightarrow}$}}}
\noindent\rolloutblankarrow\rolloutphantomarrow\hspace{0.00em}
\rolloutstate{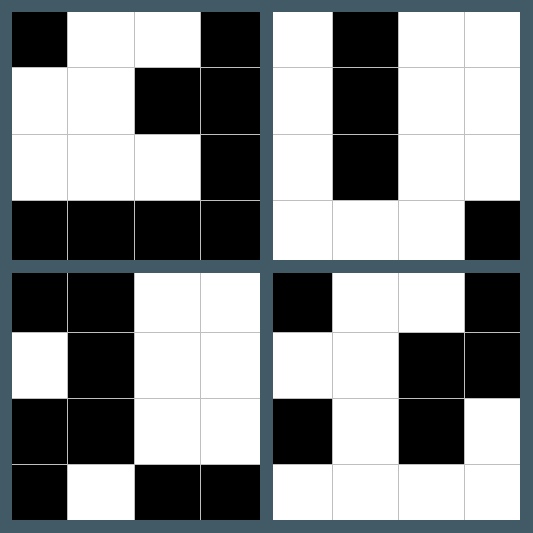}
\rolloutarrow{H_3}
\rolloutstate{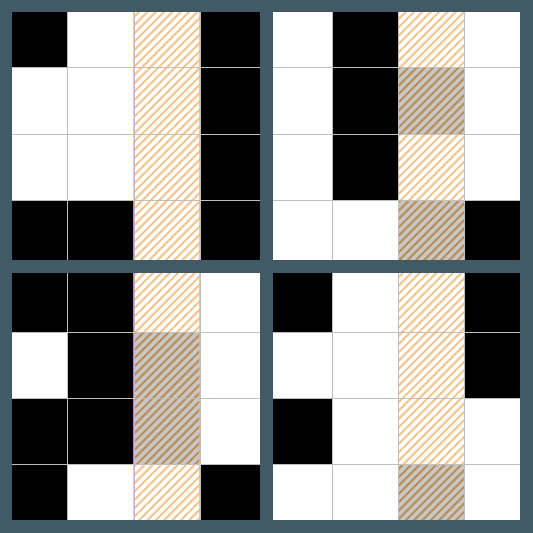}
\rolloutarrow{H_2}
\rolloutstate{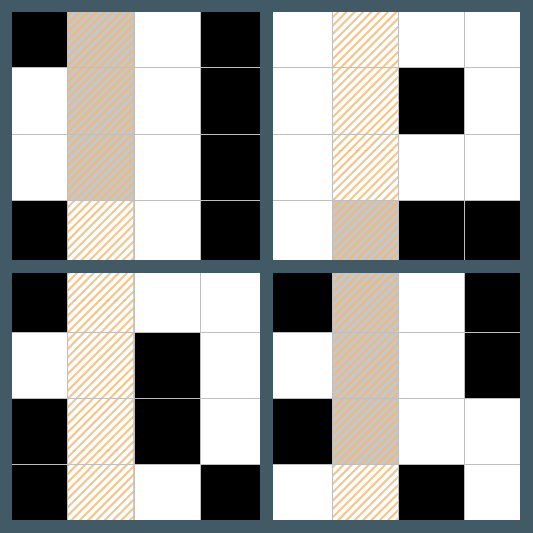}
\rolloutarrow{H_1}
\rolloutstate{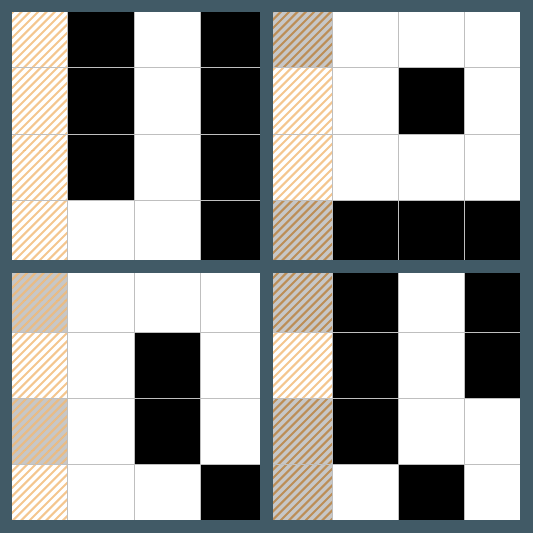}

\medskip

\noindent\rolloutblankarrow
\rolloutarrow{S_1}
\rolloutstate{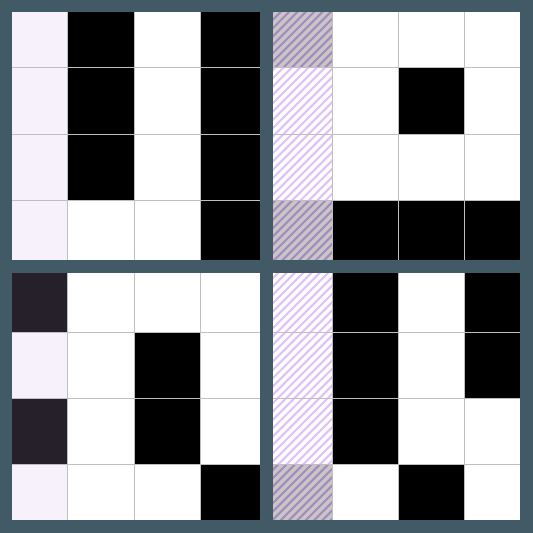}
\rolloutarrow{\mathrm{CZ}_{3,4}}
\rolloutstate{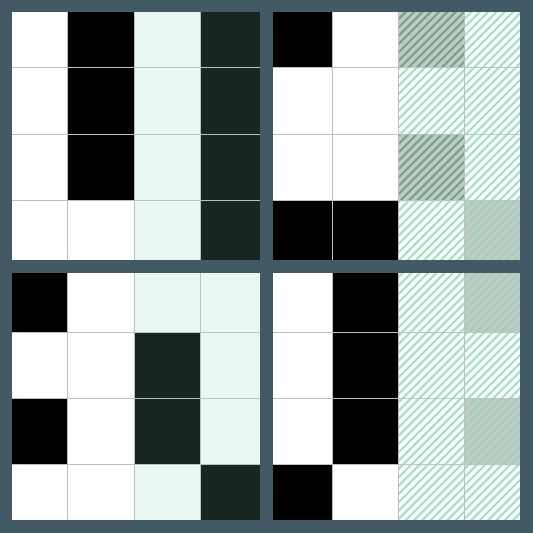}
\rolloutarrow{H_4}
\rolloutstate{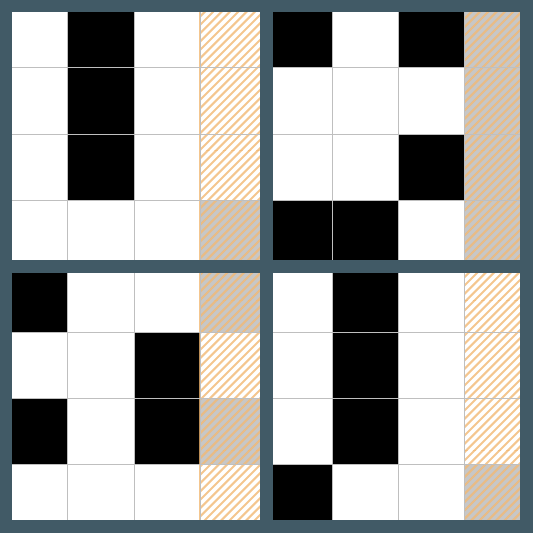}
\rolloutarrow{H_3}
\rolloutstate{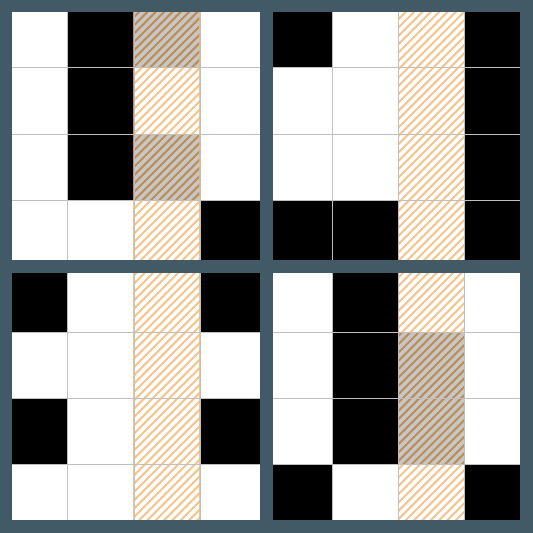}

\medskip

\noindent\rolloutblankarrow
\rolloutarrow{\mathrm{CZ}_{2,4}}
\rolloutstate{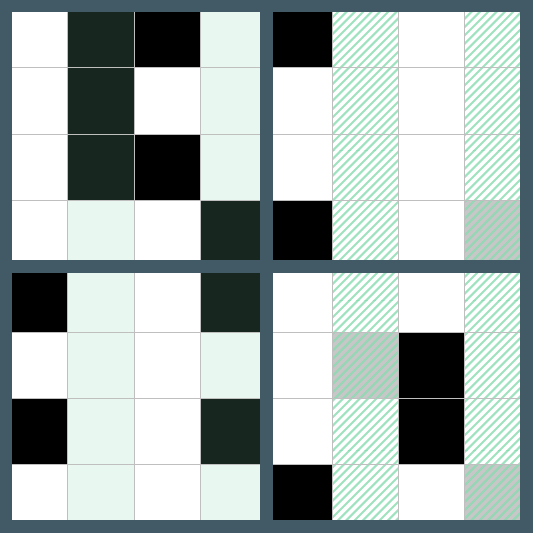}
\rolloutarrow{H_4}
\rolloutstate{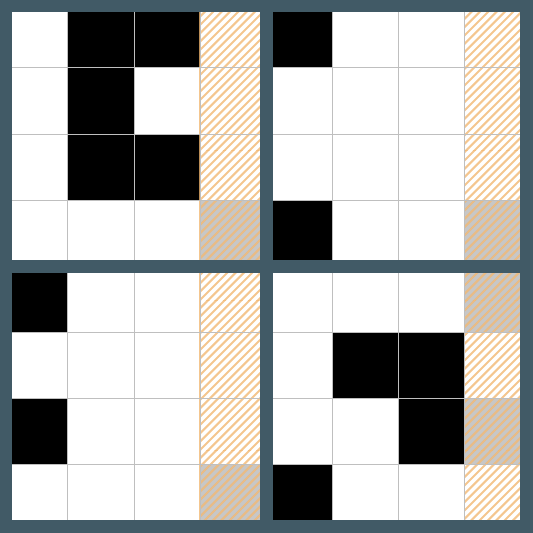}
\rolloutarrow{H_2}
\rolloutstate{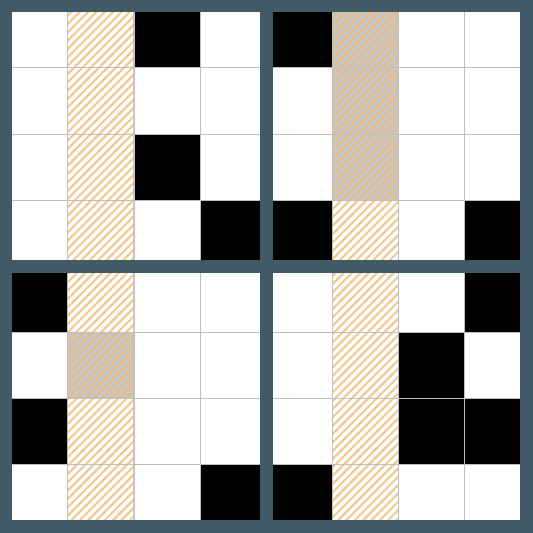}
\rolloutarrow{\mathrm{CZ}_{1,4}}
\rolloutstate{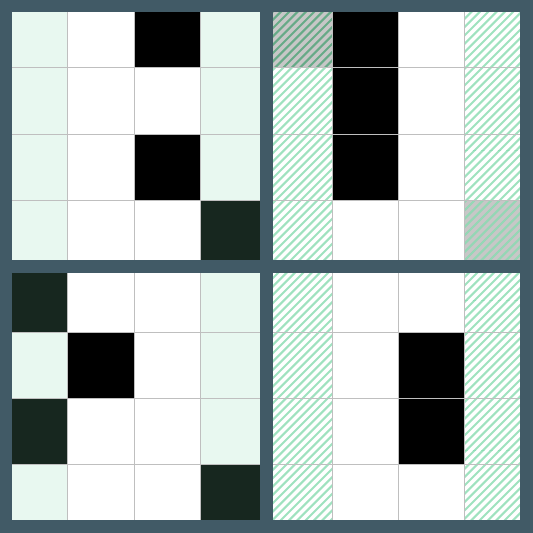}

\medskip

\noindent\rolloutblankarrow
\rolloutarrow{H_4}
\rolloutstate{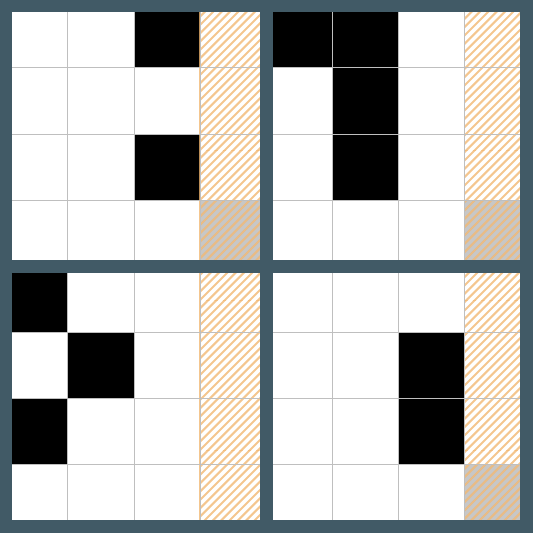}
\rolloutarrow{S_4}
\rolloutstate{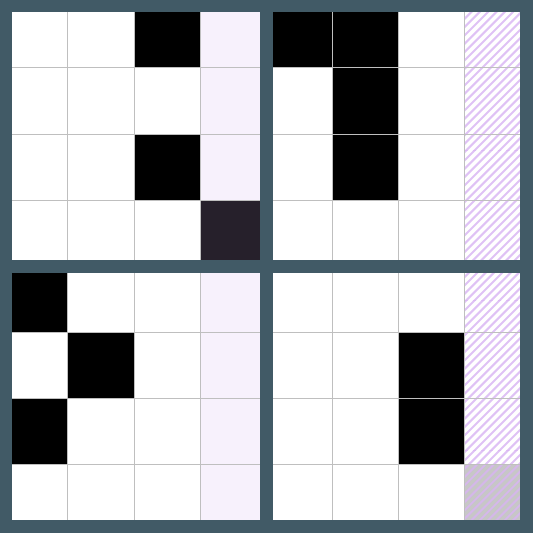}
\rolloutarrow{H_1}
\rolloutstate{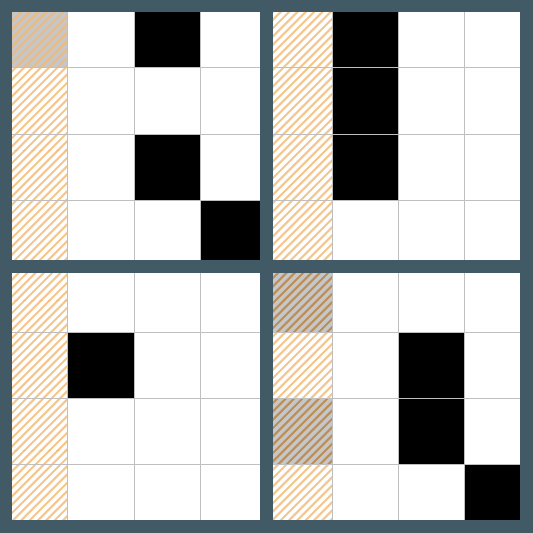}
\rolloutarrow{\mathrm{CZ}_{2,3}}
\rolloutstate{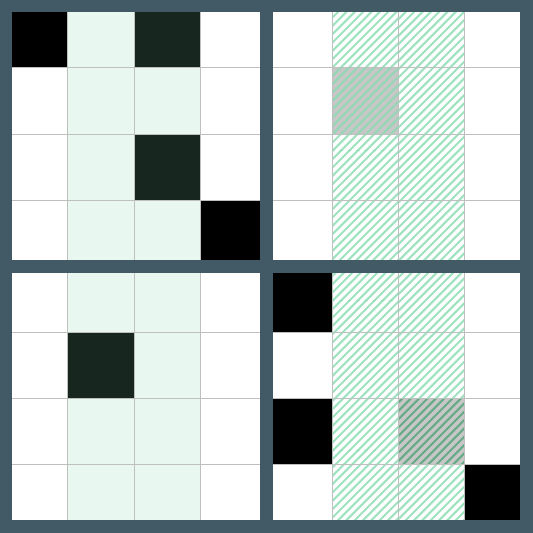}

\medskip

\noindent\rolloutblankarrow
\rolloutarrow{H_3}
\rolloutstate{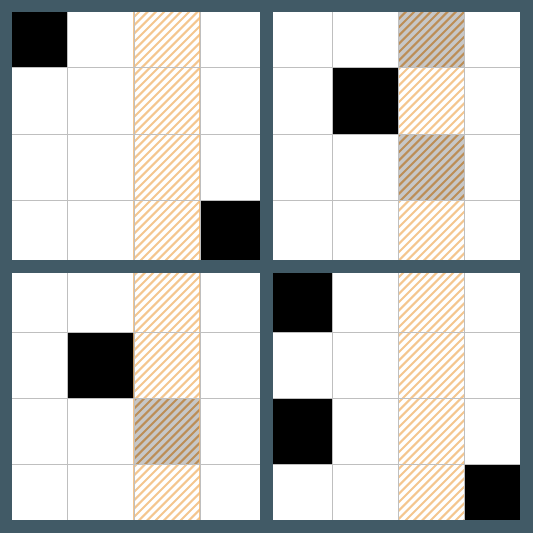}
\rolloutarrow{H_2}
\rolloutstate{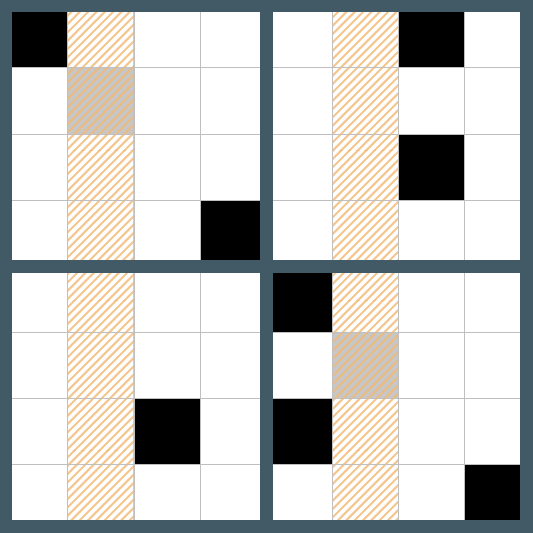}
\rolloutarrow{\mathrm{CZ}_{1,3}}
\rolloutstate{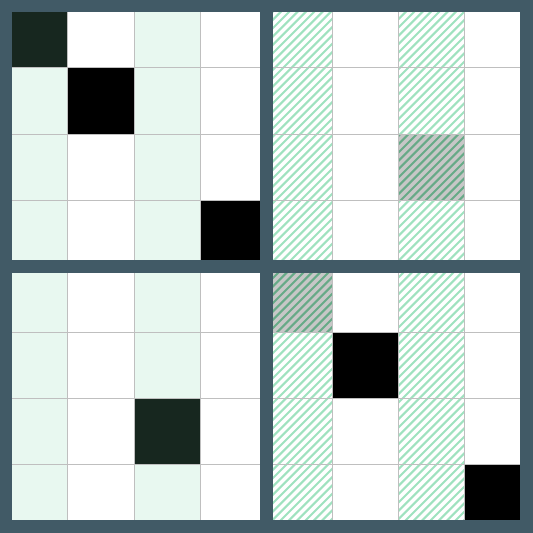}
\rolloutarrow{H_3}
\rolloutstate{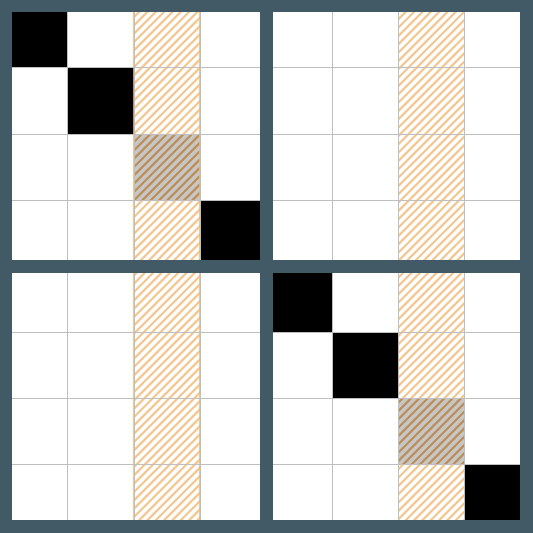}
\end{minipage}
\caption{Full tableau reduction sequence for the four-qubit target tableau shown in Fig.~1. Starting from the Fig.~1 target tableau, each arrow is labeled by the next elementary gate applied by the displayed circuit, and the sequence ends at the identity tableau.}
\label{fig:appendix-clifford-tableau-rollout}
\end{figure*}

This appendix briefly reviews the \emph{circuit model} and the \emph{stabilizer formalism} underlying the synthesis problem studied in the main text. In the standard circuit model, a quantum algorithm is expressed as a sequence of elementary gates acting on qubits. A \emph{universal gate set} such as Clifford+$T$ can approximate arbitrary quantum computations to arbitrary precision. In this paper we focus on the Clifford subset, generated by $H$, $S$, and $\mathrm{CZ}$, because it admits a compact exact representation unavailable for general circuits. For a more complete treatment of Clifford circuits and stabilizer tableaus, see Aaronson and Gottesman \cite{AaronsonGottesman2004}.

The \emph{Pauli observables} correspond to the canonical measurement bases of a qubit. Their matrix representatives are
\[
I=
\begin{bmatrix}
1 & 0\\
0 & 1
\end{bmatrix},
\quad
X=
\begin{bmatrix}
0 & 1\\
1 & 0
\end{bmatrix},
\quad
Y=
\begin{bmatrix}
0 & -i\\
i & 0
\end{bmatrix},
\quad
Z=
\begin{bmatrix}
1 & 0\\
0 & -1
\end{bmatrix}.
\]
The \emph{Clifford generators} are the one-qubit gates
\[
H=\frac{1}{\sqrt{2}}
\begin{bmatrix}
1 & 1\\
1 & -1
\end{bmatrix},
\qquad
S=
\begin{bmatrix}
1 & 0\\
0 & i
\end{bmatrix},
\]
and the two-qubit \emph{controlled-$Z$ gate}
\[
\mathrm{CZ}=
\begin{bmatrix}
1 & 0 & 0 & 0\\
0 & 1 & 0 & 0\\
0 & 0 & 1 & 0\\
0 & 0 & 0 & -1
\end{bmatrix}.
\]

A Clifford circuit can be characterized equivalently as a circuit that maps \emph{Pauli operators} to Pauli operators under conjugation. This can be verified directly on the generators:
\[
H X H = Z,\qquad H Z H = X,
\]
\[
S X S^\dagger = Y,\qquad S Z S^\dagger = Z,
\]
and for the two-qubit gate $\mathrm{CZ}$,
\[
\mathrm{CZ}(X \otimes I)\mathrm{CZ}^\dagger = X \otimes Z,\qquad
\mathrm{CZ}(Z \otimes I)\mathrm{CZ}^\dagger = Z \otimes I,
\]
\[
\mathrm{CZ}(I \otimes X)\mathrm{CZ}^\dagger = Z \otimes X,\qquad
\mathrm{CZ}(I \otimes Z)\mathrm{CZ}^\dagger = I \otimes Z.
\]
Since conjugation by each generator stays within the Pauli family, any circuit built from these generators does as well. It follows that a Clifford operation can be specified exactly by how it transforms the Pauli generators $X_1,\ldots,X_n,Z_1,\ldots,Z_n$ under conjugation. Those generator images determine the full Clifford action up to global phase. The only consistency requirement is that these images preserve the original commutation relations: $X_i$ commutes with $X_j$, $Z_i$ commutes with $Z_j$, and $X_i$ commutes with $Z_j$ if and only if $i \neq j$; on the same qubit, $X_i$ and $Z_i$ anticommute.

The standard bookkeeping device for these generator images is the stabilizer tableau. To obtain its binary form, we encode each \emph{Pauli string} by two bits per qubit: for qubit $k$, the pair $(x_k,z_k) \in \mathbb{F}_2^2$ represents $I,X,Z,Y$ via $(0,0),(1,0),(0,1),(1,1)$, respectively, up to an overall sign. An $n$-qubit Pauli string is therefore represented by a length-$2n$ binary vector
\[
(x_1,\ldots,x_n \mid z_1,\ldots,z_n).
\]
The product of two phase-free Pauli strings corresponds to addition of these binary vectors over $\mathbb{F}_2$: for instance, on one qubit, $XZ$ is represented by $(1,1)$, which is the encoding of $Y$ up to phase. This is the reason the signs of Pauli strings can be separated from the binary tableau. The full stabilizer tableau includes additional phase bits, but the synthesis problem in this paper uses only the phase-free action on Pauli labels.

This encoding preserves exactly the structure relevant for Clifford synthesis. On one qubit, the Pauli matrices satisfy
\[
X^a Z^b X^c Z^d = (-1)^{ad+bc} X^c Z^d X^a Z^b,
\qquad
a,b,c,d\in\{0,1\}.
\]
Thus the exponent $ad+bc$, computed modulo two, records whether the two Pauli operators commute or anticommute. For $n$ qubits this contribution is summed over qubits, so if $u$ and $v$ are the binary encodings of two Pauli strings, then the value of
\[
u^T \Omega v
\]
over $\mathbb{F}_2$ determines whether the underlying Pauli strings commute or anticommute, where
\[
\Omega=
\begin{bmatrix}
0 & I \\
I & 0
\end{bmatrix}.
\]
Choosing the ordered generator basis $X_1,\ldots,X_n,Z_1,\ldots,Z_n$ then gives a $2n \times 2n$ binary matrix describing how a Clifford circuit acts on those generators. In our convention, row $r$ records the binary Pauli string obtained by conjugating the $r$-th basis generator. For example, on one qubit, $H$ exchanges $X$ and $Z$, so
\[
M_H =
\begin{bmatrix}
0 & 1\\
1 & 0
\end{bmatrix},
\]
where the first row is the image of $X$ and the second row is the image of $Z$. Similarly, since $SXS^\dagger=Y$ and $SZS^\dagger=Z$, and since $Y$ has phase-free encoding $(1,1)$,
\[
M_S =
\begin{bmatrix}
1 & 1\\
0 & 1
\end{bmatrix}.
\]
For a two-qubit example, use the ordered basis $X_1,X_2,Z_1,Z_2$. The conjugation rules for $\mathrm{CZ}_{1,2}$ give
\[
X_1\mapsto X_1Z_2,\qquad
X_2\mapsto Z_1X_2,\qquad
Z_1\mapsto Z_1,\qquad
Z_2\mapsto Z_2,
\]
and hence
\[
M_{\mathrm{CZ}_{1,2}} =
\begin{bmatrix}
1 & 0 & 0 & 1\\
0 & 1 & 1 & 0\\
0 & 0 & 1 & 0\\
0 & 0 & 0 & 1
\end{bmatrix}.
\]
These examples illustrate how to read a tableau: each row is one generator image, columns $1,\ldots,n$ record $X$-support, and columns $n+1,\ldots,2n$ record $Z$-support.

This matrix is the binary part of the stabilizer tableau. The full tableau also includes phase bits that recover the final Pauli signs, but those signs are not part of the binary symplectic state used by our agent. Because a Clifford operation must preserve the Pauli commutation relations, this binary matrix must satisfy the symplectic condition
\[
M^T \Omega M = \Omega,
\]
over $\mathbb{F}_2$. There are no further constraints on the binary part of the tableau: valid binary Clifford actions are precisely the elements of the symplectic group $\Sp(2n,\mathbb{F}_2)$. In other words, the binary part of the tableau is exactly the symplectic matrix used in the main text, written in the standard Pauli-generator basis. The identity circuit corresponds to the identity tableau.

This convention also explains why circuit composition becomes matrix multiplication. If $U$ and $V$ are two Clifford circuits and $P$ is a Pauli operator, then
\[
(UV)P(UV)^\dagger = U(VPV^\dagger)U^\dagger.
\]
Thus the Pauli-coordinate map for the circuit $UV$ is obtained by composing the Pauli-coordinate maps for $V$ and then $U$, which is represented by multiplying their tableaus in the corresponding order.

Throughout this appendix we use the same convention as the main text: appending a generator on the right updates the tableau by right multiplication. Since rows record generator images, this means that the same local coordinate update is applied to every row of the tableau. Equivalently, if one inspects any fixed tableau row, the local binary coordinates in that row transform as follows. For the one-qubit generators, the corresponding binary updates affect only the coordinates of the acted-on qubit:
\[
H_i:\ (x_i,z_i) \mapsto (z_i,x_i),
\qquad
S_i:\ (x_i,z_i) \mapsto (x_i, x_i \oplus z_i),
\]
with all other qubit coordinates unchanged.

For the two-qubit generator $\mathrm{CZ}_{i,j}$, the binary update couples only the coordinates of the acted-on qubits:
\[
\mathrm{CZ}_{i,j}:\quad
z_i \mapsto z_i \oplus x_j,\qquad
z_j \mapsto z_j \oplus x_i,
\]
with $x_i$ and $x_j$ unchanged and all other coordinates unchanged. Equivalently, appending $\mathrm{CZ}_{i,j}$ adds the $X$-support column of qubit $j$ into the $Z$-support column of qubit $i$, and vice versa.

These local rules are the mechanics behind tableau reduction: every action available to the agent is an indexed generator, and applying that action performs an exact sparse update of the current binary state. In this phase-free binary representation, all three generator tableaus are involutions, including $M_S^2=I$ over $\mathbb{F}_2$; this should be distinguished from the underlying quantum gate, where $S^2=Z$ rather than the identity. Reaching the identity tableau therefore solves the group-reduction problem described in the main text, after which reversing the applied generators recovers a circuit for the original target.

\begin{figure*}[t]
\centering
\includegraphics[width=0.98\textwidth]{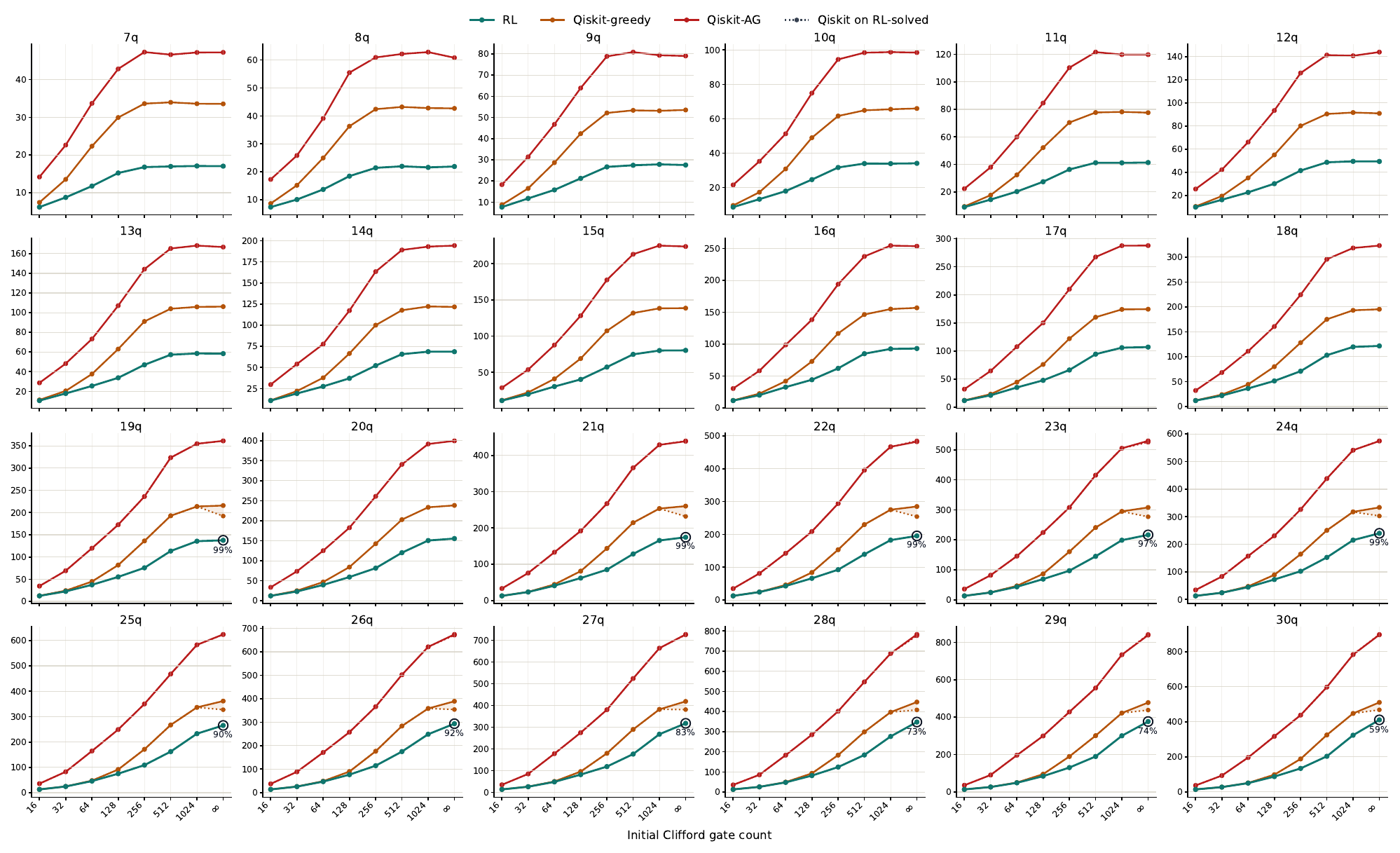}
\caption{Full CZ-count sweep for the ten-qubit-trained model, expanding the main-text view in Figure~\ref{fig:10fc-difficulty-grid} to every evaluation width from 7 to 30 qubits. Each panel fixes the evaluation qubit count and sweeps the displayed random-walk gate counts 16 through 1024, followed by the $\infty$ endpoint representing uniform Clifford sampling, under the same $6n^2$ rollout budget with the no-loop decode safeguard. Solid Qiskit curves show all-target means; dotted same-color markers and shaded gaps show the corresponding Qiskit means on the learned-solved subset when these differ.}
\label{fig:appendix-10fc-difficulty-grid-full}
\end{figure*}

\begin{figure*}[t]
\centering
\includegraphics[width=0.98\textwidth]{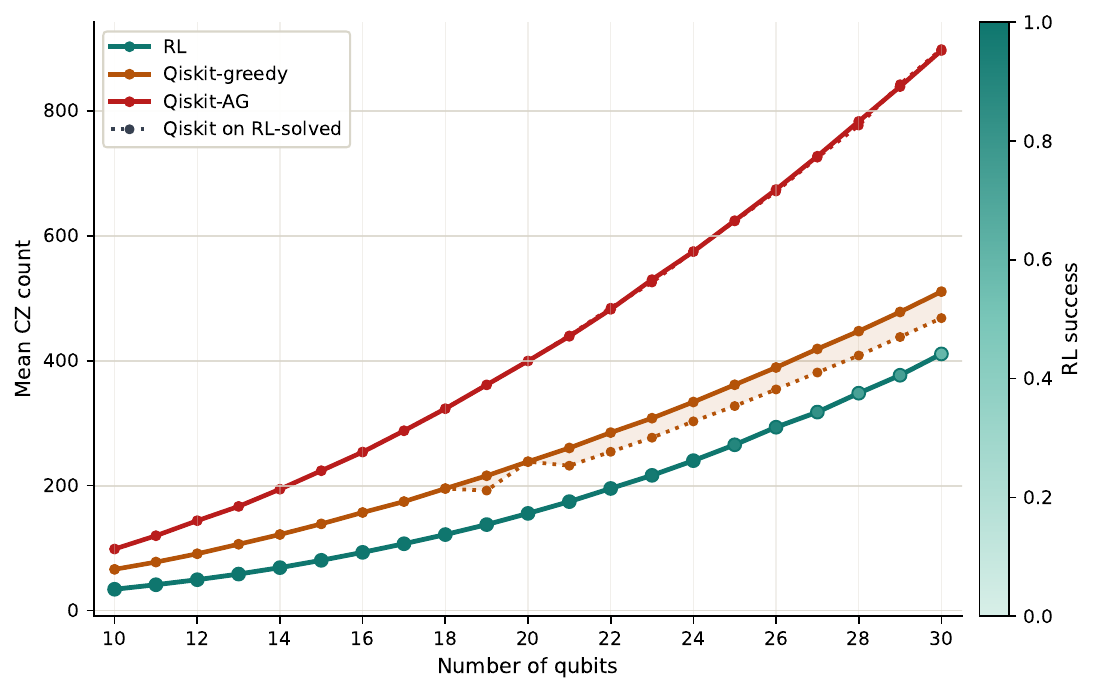}
\caption{Uniform-target CZ-count comparison for the ten-qubit-trained checkpoint under the same $6n^2$ rollout decoder with the no-loop safeguard. The benchmark uses the separate fully random Clifford target family, with $100$ held-out targets at each displayed qubit count from $10$ to $30$ qubits, and compares the learned policy against Qiskit's implementation of the Bravyi \textit{et al.} greedy synthesizer and Aaronson--Gottesman on the same target sets. Learned means are computed over solved instances only, learned-policy markers are colored by solve rate, solid Qiskit curves show all-target means, and dotted same-color segments show Qiskit means on the learned-solved subset.}
\label{fig:appendix-10fc-three-method-cz}
\end{figure*}

\begin{figure*}[t]
\centering
\includegraphics[width=0.98\textwidth]{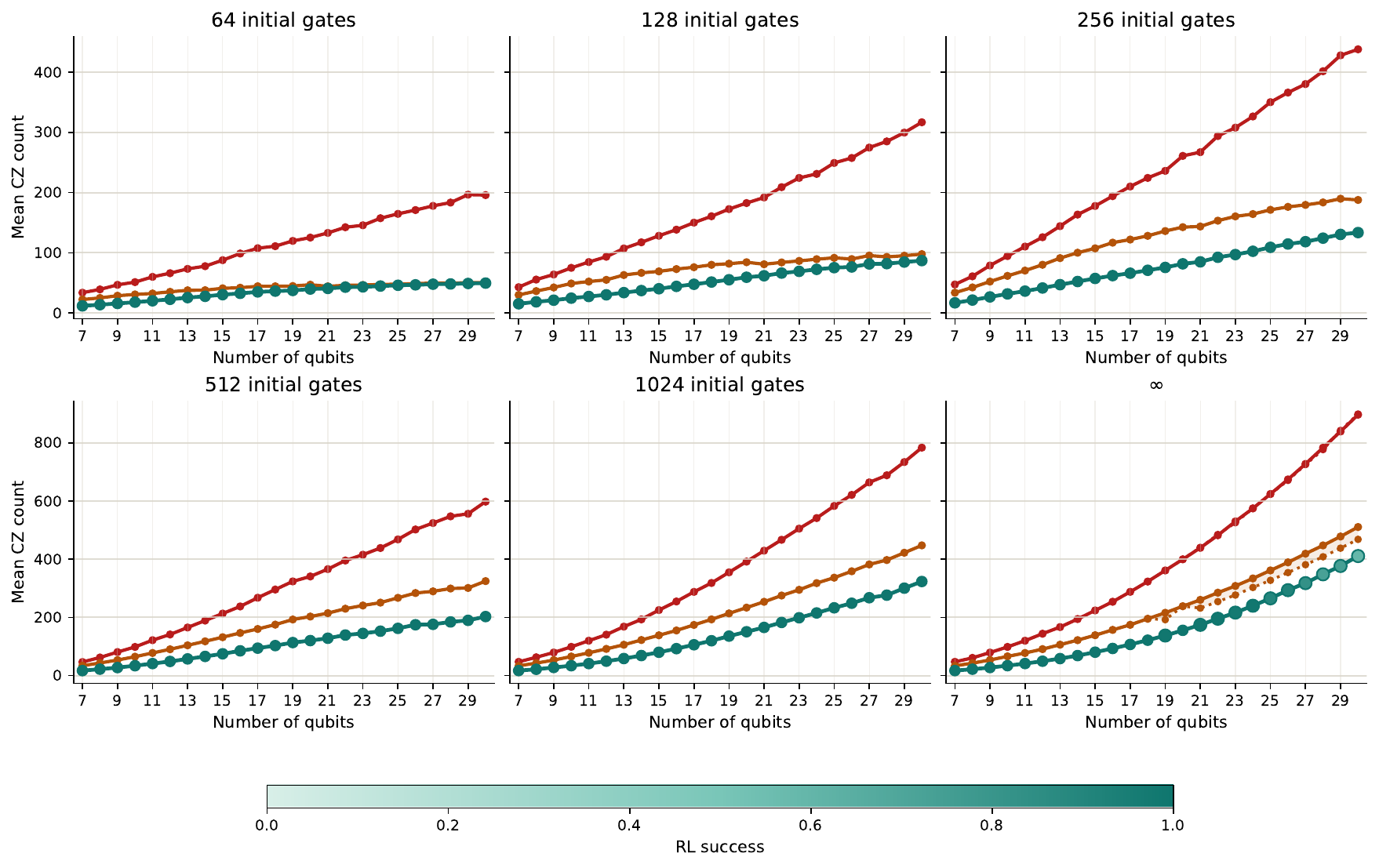}
\caption{CZ-count comparison for the ten-qubit-trained model across the harder displayed random-walk gate counts 64, 128, 256, 512, and 1024, together with the $\infty$ setting for uniform Clifford targets, under the same $6n^2$ rollout budget with the no-loop decode safeguard. Each panel plots mean CZ count against evaluation qubit count on the full sweep from 7 to 30 qubits. Solid Qiskit curves show all-target means; dotted same-color markers and shaded gaps show learned-solved-subset means where the learned policy fails.}
\label{fig:appendix-10fc-three-method-total-full}
\end{figure*}

\subsection{Six-qubit Checkpoint Extrapolation}
\label{app:sixfc-extrapolation}

Figure~\ref{fig:appendix-6fc-difficulty-grid-full} and Table~\ref{tab:appendix-sixfc-extrapolation} isolate the zero-shot behavior of the checkpoint trained only on six-qubit targets. Under the same $6n^2$ no-loop rollout decoder used elsewhere, this checkpoint solves every larger-circuit target in the canonical sweep: all random-walk batches from $7$ to $30$ qubits and all fully random Clifford batches from $7$ to $30$ qubits. The cost is circuit quality rather than reliability. The six-qubit checkpoint remains strong on smaller and easier transfer cells, often beating both Qiskit baselines in CZ count, but its CZ count rises sharply on the widest high-depth and fully random targets. In contrast, the continued ten-qubit checkpoint gives much lower CZ counts when it succeeds, but begins to fail on fully random targets beyond about $24$ qubits.

\begin{figure*}[t]
\centering
\includegraphics[width=0.98\textwidth]{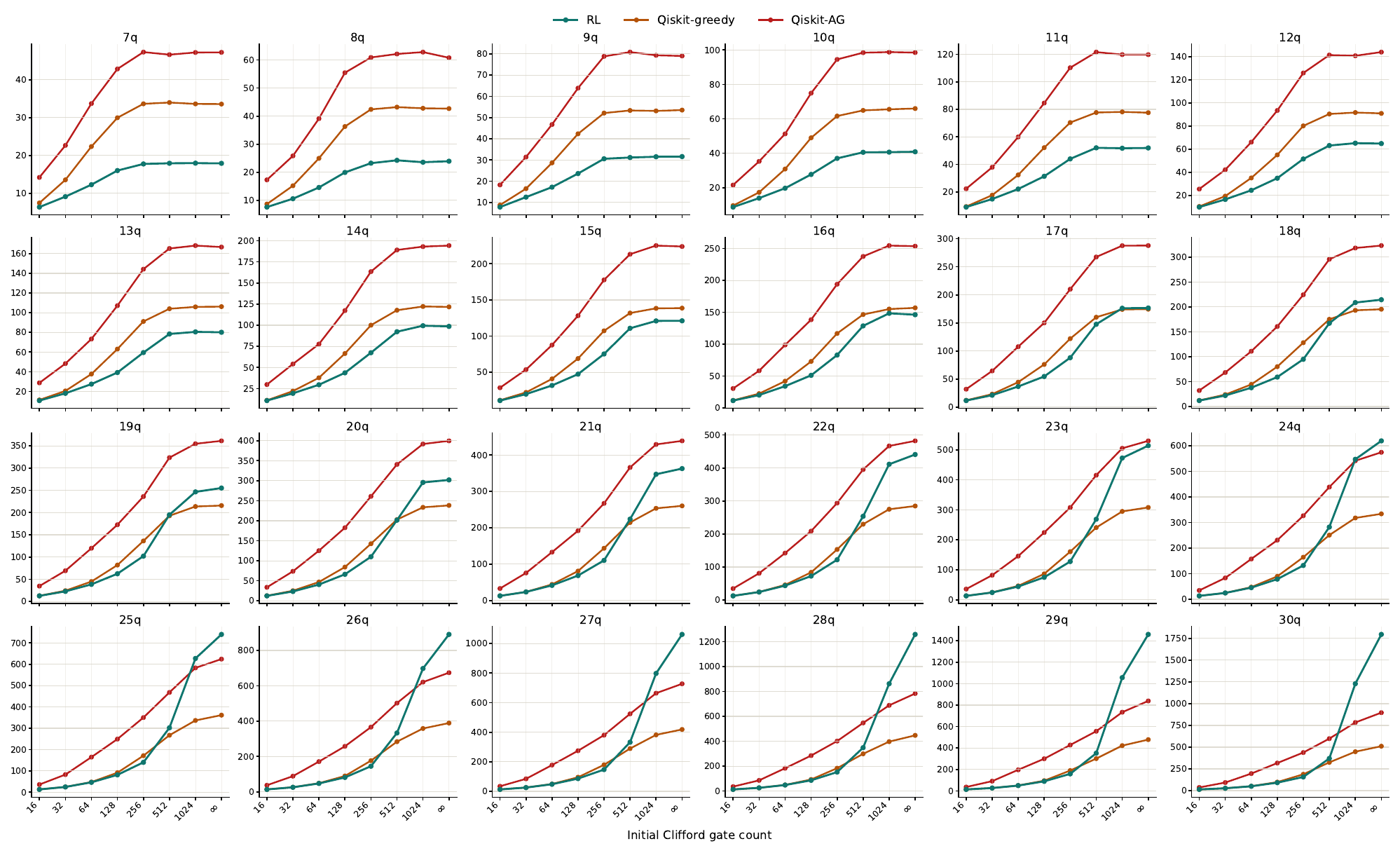}
\caption{Full CZ-count sweep for the six-qubit-trained model, using the same layout as Figure~\ref{fig:appendix-10fc-difficulty-grid-full}. Each panel fixes the evaluation qubit count and sweeps the displayed random-walk gate counts 16 through 1024, followed by the $\infty$ endpoint representing uniform Clifford sampling, under the same $6n^2$ rollout budget with the no-loop decode safeguard. The checkpoint solves every target shown; the figure therefore directly compares mean CZ count against the two Qiskit baselines on the same target batches.}
\label{fig:appendix-6fc-difficulty-grid-full}
\end{figure*}

\begin{table*}[t]
\centering
\scriptsize
\setlength{\tabcolsep}{3.5pt}
\begin{tabular}{lrrrrrr}
\toprule
Evaluation subset & Cells & Succ. & 6q model CZ & AG CZ & Greedy CZ & 6q model lower CZ than AG / greedy\\
\midrule
Std. $7$--$20$ q. & 98 & 1.000 & 52.97 & 114.06 & 65.14 & 98/98 \; / \; 93/98\\
Std. $21$--$30$ q., $d\leq128$ & 40 & 1.000 & 41.43 & 134.81 & 46.59 & 40/40 \; / \; 40/40\\
Std. $21$--$30$ q., $d\geq256$ & 30 & 1.000 & 383.56 & 479.53 & 263.19 & 23/30 \; / \; 10/30\\
Fully random & 24 & 1.000 & 448.68 & 388.72 & 229.33 & 17/24 \; / \; 10/24\\
\bottomrule
\end{tabular}
\caption{Zero-shot extrapolation of the six-qubit-trained checkpoint on larger circuits. Each cell is one matched setting of target family, qubit count, and difficulty from the canonical evaluation package. Entries report mean CZ count over solved instances, and the six-qubit trained model solves all targets. The final row uses the fully random Clifford target family, which is tracked separately from the standard random-walk sweep.}
\label{tab:appendix-sixfc-extrapolation}
\end{table*}

\begin{table*}[p]
\centering
\small
\begin{tabular}{llccc}
\toprule
Difficulty & Family & Success rate & Mean CZ & Seeds\\
\midrule
$d=20$ & Ours & 1.000 & $6.03 \pm 0.01$ & 3 \\
$d=20$ & RelTransformer~\cite{Shaw2018RelativeAttention} & 1.000 & $6.04 \pm 0.01$ & 3 \\
$d=20$ & Transformer~\cite{Vaswani2017Attention} & 1.000 & $6.06 \pm 0.03$ & 3 \\
$d=20$ & MLP & 1.000 & $6.19 \pm 0.01$ & 3 \\
$d=20$ & FlatMLP & 1.000 & $6.14 \pm 0.01$ & 3 \\
\midrule
$d=100$ & Ours & 1.000 & $11.99 \pm 0.04$ & 3 \\
$d=100$ & RelTransformer~\cite{Shaw2018RelativeAttention} & 1.000 & $12.06 \pm 0.03$ & 3 \\
$d=100$ & Transformer~\cite{Vaswani2017Attention} & 1.000 & $12.50 \pm 0.03$ & 3 \\
$d=100$ & MLP & 1.000 & $13.89 \pm 0.12$ & 3 \\
$d=100$ & FlatMLP & 1.000 & $13.62 \pm 0.16$ & 3 \\
\midrule
$d=1024$ & Ours & 1.000 & $13.71 \pm 0.06$ & 3 \\
$d=1024$ & RelTransformer~\cite{Shaw2018RelativeAttention} & 1.000 & $13.79 \pm 0.15$ & 3 \\
$d=1024$ & Transformer~\cite{Vaswani2017Attention} & 1.000 & $14.37 \pm 0.17$ & 3 \\
$d=1024$ & MLP & 1.000 & $16.46 \pm 0.09$ & 3 \\
$d=1024$ & FlatMLP & 1.000 & $15.67 \pm 0.15$ & 3 \\
\midrule
$d=\infty$ & Ours & 1.000 & $13.49 \pm 0.01$ & 3 \\
$d=\infty$ & RelTransformer~\cite{Shaw2018RelativeAttention} & 1.000 & $13.47 \pm 0.06$ & 3 \\
$d=\infty$ & Transformer~\cite{Vaswani2017Attention} & 1.000 & $14.12 \pm 0.28$ & 3 \\
$d=\infty$ & MLP & 1.000 & $15.88 \pm 0.05$ & 3 \\
$d=\infty$ & FlatMLP & 1.000 & $15.19 \pm 0.16$ & 3 \\
\bottomrule
\end{tabular}
\caption{Full six-qubit architecture ablation over three seeds on fixed 100-target validation sets per difficulty. Finite $d$ is the initial Clifford gate count used to generate each random-walk target; $d=\infty$ uses uniformly sampled Clifford targets. Entries report mean CZ count over solved instances under rollout decoding, as mean $\pm$ standard deviation across seeds; all completed-model success rates are $1.000$.}
\label{tab:appendix-clifford-architecture-ablation}
\end{table*}

\section{Decoding Details}
\label{app:decoding-details}

Given a trained policy and a target tableau $M_{\mathrm{target}}$, greedy decoding repeatedly applies the generator with largest policy probability to the current tableau, stopping when the identity is reached or when the rollout budget is exhausted.
The resulting reduction sequence is reversed to obtain a circuit for $M_{\mathrm{target}}$, as in the reverse-reduction formulation of Section~\ref{sec:method}.
We also use the inverse-tableau trick of Gidney~\cite{Gidney2020InverseTableaus}: decode both $M_{\mathrm{target}}$ and $M_{\mathrm{target}}^{-1}$, convert the inverse-tableau solution back to a circuit for $M_{\mathrm{target}}$, and keep the shorter successful circuit.
For sampled policy-guided rollouts, we repeat the same reduction procedure under a fixed compute budget and retain the shortest solved circuit found.
In the larger-qubit sweeps, rollouts use a no-loop safeguard that rejects actions returning to a tableau already visited in the current trajectory when an alternative action is available.

All learned-policy targets and success checks use the phase-free binary symplectic tableau. For the $\infty$ target family generated with Qiskit's \texttt{random\_clifford}, we store and evaluate only the returned \texttt{symplectic\_matrix}; Pauli signs are not part of the target state. The Qiskit baselines are compared on the same target batches at the level of two-qubit cost: CZ and CX gates each count as one entangling gate, SWAP counts as three, and single-qubit, Pauli, and phase-correction gates are excluded from the reported CZ-equivalent counts.

For the six-qubit Bravyi \textit{et al.} benchmark in Section~\ref{sec:experiments}, Table~\ref{tab:appendix-six-qubit-search} gives the decoding settings behind the headline exact-recovery curve. The greedy pass evaluates the target and inverse tableaus once and keeps the shorter successful circuit. The longer search uses the same target/inverse comparison, a rollout cap of 512 generator applications, the native CUDA rollout backend, and 2048 parallel rollout environments. It uses a four-arm union of temperature schedules: \texttt{t40}, \texttt{t40$\to$greedy}, \texttt{t40$\to$t25}, and \texttt{t40$\to$t12}, where \texttt{t40} samples at fixed temperature $4.0$, \texttt{t40$\to$t12} samples with temperature linearly decreasing from $4.0$ to $1.2$, and \texttt{t40$\to$greedy} decreases from $4.0$ to $0.05$. The fixed-temperature \texttt{t40} arm alone reaches $990/1003$ exact recoveries in the recorded first-hit trace; the annealed arms contribute the final five recoveries. Each arm runs 4096 trials per circuit for each of the target and inverse tableaus, so one arm corresponds to $1003\times4096\times2=8{,}216{,}576$ sampled reductions. The reported milestone times are first-hit times from the union trace: once an optimal circuit is found for a benchmark instance, later samples can still improve other instances, but the first-hit timestamp for that instance is fixed.

\begin{table*}[p]
\centering
\small
\resizebox{\textwidth}{!}{%
\begin{tabular}{lrrrrl}
\toprule
Decoder stage & Exact / total & Time & Cumulative reductions & Step cap & Temperature schedules from $4.0$\\
\midrule
Greedy prefilter & 507/1003 & 0.83 s & 2,006 & 512 & greedy\\
Brief search pass & 610/1003 & 21 s & -- & 512 & sampled policy rollouts\\
First reaches prior SOTA & 982/1003 & 21.6 min & 3,410,999 & 512 & $4.0$\\
First reaches 990 & 990/1003 & 51.9 min & 8,170,010 & 512 & $4.0$\\
First reaches final count & 995/1003 & 183.1 min & 29,143,574 & 512 & four schedules: $4.0$, greedy, $2.5$, $1.2$\\
\bottomrule
\end{tabular}
}
\caption{Decoding settings for the six-qubit 1003-circuit benchmark. ``Reductions'' counts complete sampled reduction attempts, including both target and inverse tableaus when inverse comparison is enabled. Timings are total suite time, not per-instance time, from the recorded native CUDA rollout traces on a single NVIDIA A100 80GB GPU. The brief search pass is the short max-gap-one pass reported in the main text; the full first-hit trace is recorded for the four-arm run.}
\label{tab:appendix-six-qubit-search}
\end{table*}

\begin{table}[p]
\centering
\small
\begin{tabular}{lr}
\hline
Hyperparameter & Value\\
\hline
Optimizer & PPO\\
Training steps & $1.5\times 10^9$\\
Parallel environments & 2048\\
Rollout length & 256\\
Curriculum range & difficulties 1--1000\\
Curriculum advance threshold & success $=1.00$\\
Learning rate & $2.5\times 10^{-4}$\\
Batch size & 8192\\
Epochs per update & 5\\
Discount factor $\gamma$ & 0.99\\
GAE parameter $\lambda$ & 0.95\\
Policy clip coefficient & 0.15\\
Value clip coefficient & 0.2\\
\texttt{hamming\_left} scale & 0.5\\
Single-qubit gate cost & 0.01\\
Terminal bonus & 25\\
\hline
\end{tabular}
\caption{Training hyperparameters for the main 6-qubit Clifford run.}
\label{tab:appendix-training-hparams}
\end{table}

\section{Reproducibility, Compute, and Assets}
\label{app:reproducibility}

The accompanying \href{https://github.com/y-richie-y/cliffords-data}{code and data release} contains the code, trained checkpoints, fixed target batches, recorded result tables, figure-generation scripts, and a README with exact commands for the local evaluator, ablation validation, ensemble validation, the 1003-circuit benchmark, and the basic PPO training entrypoint. The bundled evaluator can run on CPU, with optional GPU acceleration for larger sweeps. The headline six-qubit search timings are recorded in the supplied first-hit trace and summary tables; the larger-qubit evaluation tables record per-setting elapsed time for learned rollouts and Qiskit baselines. A live browser demo for the released six-qubit model, together with the recovered Bravyi \textit{et al.} benchmark circuits, is hosted at \href{https://y-richie-y.github.io/clifford/}{\texttt{y-richie-y.github.io/clifford/}}.

The work uses synthetic Clifford targets and public benchmark data from Bravyi \textit{et al.}; no human-subject or scraped personal data are used. Existing software and benchmark assets are cited where used, and the code release identifies the bundled source trees, checkpoint files, and benchmark CSVs needed to reproduce the reported experiments.

The expected positive impact is improved Clifford synthesis, which may reduce two-qubit gate counts in quantum-compilation workflows. Possible negative impacts are indirect: better circuit synthesis can marginally reduce the cost of quantum experiments, including experiments whose downstream applications are outside the scope of this paper. The released assets are specialized to Clifford synthesis and do not contain sensitive data or general-purpose generative models.

\end{document}